\documentclass[11pt,A4paper,fullpage]{article}

\usepackage{todonotes} 

\usepackage{amsfonts}
\usepackage{fullpage}
\usepackage{graphicx,color}
\usepackage{xspace}
\usepackage{enumerate}
\usepackage{amssymb,amsmath}
\usepackage{amsthm}
\usepackage{subfig}

\usepackage[utf8]{inputenc}
\usepackage{times}

\usepackage[utf8]{inputenc}
\usepackage{authblk}
\usepackage{hyperref}

\usepackage{mathtools}


\newtheorem{theorem}{Theorem}

\newtheorem{lemma}[theorem]{Lemma}

\newtheorem{definition}[theorem]{Definition}

\def\multiset#1#2{\ensuremath{\left(\kern-.3em\left(\genfrac{}{}{0pt}{}{#1}{#2}\right)\kern-.3em\right)}}

\newcommand{\mch}[2]{
\left.\mathchoice
  {\left(\kern-0.48em\binom{#1}{#2}\kern-0.48em\right)}
  {\big(\kern-0.30em\binom{\smash{#1}}{\smash{#2}}\kern-0.30em\big)}
  {\left(\kern-0.30em\binom{\smash{#1}}{\smash{#2}}\kern-0.30em\right)}
  {\left(\kern-0.30em\binom{\smash{#1}}{\smash{#2}}\kern-0.30em\right)}
\right.}

\newcommand{\wa}{{\sc Weighting}\xspace}
\newcommand{\Opt}{{\sc Opt}\xspace}

\newcommand{\acc}{\ensuremath{\textrm{Cr}}\xspace}

\newcommand{\comp}{\ensuremath{\textrm{Cr}}\xspace}

\newcommand{\mincycl}{\textsc{MinCyclic}$(n,k,H)$\xspace}
\newcommand{\search}{\textsc{Search}$(n,k,H)$\xspace}

\newcommand{\cSearch}{\textsc{ContiniousSearch}($k,H$) \xspace}

\newcommand{\Identify}{\textsc{Identify}($m,H$) \xspace}
\newcommand{\Range}{\textsc{Find}($k,H$)\xspace}

\newcommand{\CWeighting}{\textsc{C-Weighting}}

\begin{document}
\title{R\'enyi-Ulam Games and Online Computation with Imperfect Advice}

\author[1]{Spyros Angelopoulos}
\author[2]{Shahin Kamali}

\date{}

\affil[1]{CNRS and LIP6-Sorbonne University, Paris, France. {\tt spyros.angelopoulos@lip6.fr}} 
\medskip

\affil[2]{York University, Canada. {\tt kamalis@yorku.ca} }

\maketitle

\begin{abstract}
We study the nascent setting of online computation with {\em imperfect} advice, in which the online algorithm is enhanced by some prediction encoded in the form of a possibly erroneous binary string. The algorithm is oblivious to the advice error, but defines a desired {\em tolerance}, namely an upper bound on the number of erroneous advice bits it can tolerate. This is a model that generalizes the {\em untrusted} advice model [Angelopoulos et al. ITCS 2020], in which the performance of the algorithm is only evaluated at the extreme values of error (namely, if the advice has either no errors, or if it is generated adversarially).

In this work, we establish connections between games with a lying responder, also known as {\em R\'enyi-Ulam games}, and the design and analysis of online algorithms with imperfect advice. Specifically, we demonstrate how to obtain upper and lower bounds on the competitive ratio for well-studied online problems such as time-series search, online bidding, and fractional knapsack. Our techniques provide the first lower bounds for online problems in this model. We also highlight and exploit connections between competitive analysis with imperfect advice and fault-tolerance in multiprocessor systems. Last, we show how to waive the dependence on the tolerance parameter, by means of resource augmentation and robustification.

\medskip

\noindent
{\bf Keywords} \ Online computation, noisy queries, R\'enyi-Ulam games, beyond worst-case analysis, fault-tolerant algorithms.

\end{abstract}

\section{Introduction}
\label{sec:introduction}

Online computation, and {\em competitive analysis}, in particular, have served as the definitive framework for the theoretical analysis of algorithms in a state of uncertainty. While the early, standard definition of online computation~\cite{SleTar85} assumes that the algorithm has no knowledge in regards to the request sequence, in practical situations the algorithm may indeed have certain limited, but possibly inaccurate such information (e.g., some lookahead, or historical information on typical sequences). Hence the need for more nuanced models that capture the power and limitations of online algorithms enhanced with external information. 

One such approach, within Theoretical Computer Science, is the framework of {\em advice complexity}; see~\cite{DobKraPar09,BocKomKra09,EmekFraKorRos2011}, the survey~\cite{Boyar:survey:2016} and the book~\cite{komm2016introduction}. In the advice-complexity model (and in particular, the {\em tape} model~\cite{DBLP:journals/jcss/BockenhauerKKK17}), the online algorithm receives a string that encodes information concerning the request sequence, and which can help improve its performance. The objective is to quantify the trade-offs between the size of advice (in terms of number of bits), and the competitive ratio of the algorithm. This model places stringent requirements: the advice is assumed to be error free, and may be provided by an omnipotent oracle. Thus, as noted in~\cite{DBLP:books/cu/20/MitzenmacherV20}, this model is mostly of theoretical significance.

A different, and more practical approach, studies the effect of {\em predictions} towards improving the competitive ratio. In this model, the online algorithm is enhanced with some imperfect information concerning the request sequence, without restrictions on its size. One is interested in algorithms whose performance degrades gently as function of the prediction {\em error}, and specifically perform well if the prediction is error free (what is called the {\em consistency} of the algorithm), but also remain robust under any possible error (what is called the {\em robustness} of the algorithm). This line of research was initiated with the works~\cite{DBLP:conf/icml/LykourisV18} and~\cite{NIPS2018_8174}, and a very large number of online problems have been studied under this model (see, e.g., the survey~\cite{DBLP:books/cu/20/MitzenmacherV20} and the online collection~\cite{predictionslist}).

A combination of the advice complexity and prediction models is the {\em untrusted advice} model, introduced in~\cite{DBLP:conf/innovations/0001DJKR20}. Here, some of the advice bits may be erroneous, and the algorithm's performance is evaluated at two extreme situations, in regards to the advice error. At the one extreme, the advice is error-free, whereas at the other extreme, the 
advice is generated by a (malicious) adversary who aims to maximize the performance degradation of the algorithm. Using the terminology of algorithms with predictions, these two competitive ratios are called consistency and robustness, respectively. The objective is to identify algorithms that are {\em Pareto-efficient}, and ideally {\em Pareto-optimal}, for these two extreme measures.  Several online problems have been studied recently within this framework of Pareto-optimality (both within the untrusted advice and the predictions models); see, e.g.,~\cite{wei2020optimal,li2021robustness,lee2021online,DBLP:conf/innovations/000121, DBLP:conf/aaai/0001K21}.

\subsection{Online computation with imperfect advice}

In this work, we focus on a nascent model in which the advice can be imperfect. The starting observation is that the Pareto-based framework of untrusted advice only focuses on extreme competitive ratios, namely the consistency and the robustness. A more general issue is to evaluate the performance of an online algorithm as function of the advice error. Given an advice string of size $k$, we denote by $\eta\leq k$ the number of erroneous bits. The objective is then to study the power and limitations of online algorithms within this setting, i.e., from the point of view of both upper and lower bounds on the competitive ratio.

Naturally, the algorithm does not know the exact advice error ahead of time. Instead, the algorithm defines an appliction-specific parameter $H \leq k$ which determines the desired {\em tolerance} to errors, or, equivalently, an anticipated upper bound on the advice error. This parameter appears very often in the analysis of games with a lying responder such as {\em R\'enyi-Ulam games}~\cite{RivestMKWS80}, which are of interest to our work, as we will discuss shortly. It is also further motivated by recent works in learning-enhanced online algorithms with {\em weak predictions}, in which the prediction is an upper bound of some pertinent parameter of the input (see e.g., online knapsack with frequency predictions~\cite{knapsack:frequency}, where the prediction is an upper bound on the size of items that appear online). Our objective is to quantify the tradeoffs between advice size, tolerance and competitive ratio, both from the point of upper and lower bounds.

A different interpretation of the imperfect advice model treats each advice bit as a response to a {\em binary query} concerning the input. Hence, one may think of a $k$-bit advice string as a prediction elicited in the form of $k$ binary queries, not all of which may receive correct responses. Queries are known to help improve the performance of approximation algorithms in ML applications; see, e.g, clustering with noisy queries~\cite{NIPS2017_7161}, in which a query asks whether two points should belong in the same cluster, and where each query receives a correct response with probability $p$ that is known to the algorithm. In this work, we study the power, but also the limitations of online algorithms with noisy queries. However, unlike~\cite{NIPS2017_7161}, we do not rely on any probabilistic assumptions concerning the query responses. To our knowledge, the imperfect advice model (in particular, its binary query-based interpretation) has only been applied to the problems of {\em contract scheduling}~\cite{DBLP:conf/aaai/0001K21}, and time-series search~\cite{DBLP:conf/aaai/0001K22}, and solely from the point of view of upper bounds.

\subsection{Contribution}
\label{subsec:contribution}

We establish connections between games with a lying responder and the design and analysis of online algorithms with imperfect advice. Namely, we show how to leverage results from the analysis of {\em R\'enyi-Ulam} games, and obtain both positive and negative results on the competitive analysis. We apply these tools towards well-studied problems such as time-series search, online bidding, and online fractional knapsack. Our results improve the known upper bounds for these problems, where such results were already known, but also provide the first lower bounds on the competitive ratio of online problems in this setting.

More precisely, we begin as a warm-up with the time-series search problem in Section~\ref{sec:ts}, which illustrates how these techniques can help us improve upon the results of~\cite{DBLP:conf/aaai/0001K22}; we also show how to evaluate the competitive ratios, using approximations based on the binary entropy function. In Section~\ref{sec.bidding}, we study a more complex application, namely the online bidding problem, first studied in~\cite{DBLP:conf/innovations/0001DJKR20} in the context of untrusted advice. Here, the crucial part is establishing near-optimal lower bounds. We achieve this by formulating a multi-processor version of online bidding in $l \leq 2^k$ processors, in which a certain number of processors may be faulty; we then relate the competitive ratio of this problem to the imperfect advice setting, by relating fault-tolerance in the processor level, to the inherent error in R\'enyi-Ulam games. In Section~\ref{sec:fractional.knapsack} we study the online fractional knapsack problem. Here, we present an algorithm whose competitive ratio converges to 1 at a rate exponential to $k$, as long as $H< k/2$	. We also present a near-matching lower bound that shows that our algorithm is close-to-optimal. For the upper bound, the crux is to use to allocate queries so as to approximate 
two appropriately defined parameters of the instance. For the lower bound, we use an information theoretic argument. Specifically, we show a reduction from R\'enyi-Ulam games: if there existed an algorithm of competitive ratio better than a certain value, one could play the game beyond the theoretical performance bound, which is a contradiction.

As explained above, the parameter $H$ expresses the algorithm's desired tolerance to errors, and is thus application-specific. In Section~\ref{sec:extensions} we argue that the results are useful even in settings in which the precise tolerance is not known ahead of time. We accomplish this in two different ways: First, by {\em resource-augmentation} arguments, i.e., by comparing the performance of an algorithm with perfect (error-free) advice of size $k$ to that of an algorithm with $l>k$ advice bits but potentially very high advice error. Second, by {\em robustifying} the algorithm, namely by requiring that the algorithm performs well even if the error happens to exceed the tolerance parameter.

The techniques we develop can be applicable to other online problems. Specifically, our approach to the online bidding problem defines the following general framework:
For upper bounds, one would aim to define a collection of ``candidate'' algorithms that are closely ranked in terms of their worst-case performance. Then the advice can be used so as to select a suitable candidate from this collection that is close to the best-possible. For lower bounds, one would aim to show that in any collection of candidate algorithms, the erroneous queries may have to always return a solution sufficiently far, in terms of ``rank'', from the best one; then one needs to relate the concept of ``rank'' to performance, from a lower-bound point of view. This last part highlights connections between an online problem with imperfect advice, and its fault-tolerant version in a parallel system (with no advice). On the other hand, our approach to the time-series and fractional knapsack problems illustrate another general technique: For upper bounds, one should identify some important parameters of the problem, then allocate the queries appropriately so as to approximate them in the presence of response errors. For lower bounds, information-theoretic arguments should establish a reduction from a R\'enyi-Ulam game to the online problem.

\section{Games with a lying responder}
\label{sec:core}

We review some core results related to games with a lying responder which will be in the heart of the analysis of online problems with imperfect advice. We are interested, in particular, in~\cite{RivestMKWS80}, which studied games between a {\em questioner} and a {\em responder}, related to an unknown value $x$ drawn from a domain ${\cal D}$. The questioner may ask general queries of the form ``is $x$ in $S$'', where $S$ is some subset of ${\cal D}$, and which are called {\em subset} queries. The upper bounds of~\cite{RivestMKWS80} hold even if the questioner asks much simpler queries, namely {\em comparison} queries of the form ``is $x$ at most $a$'', for some given $a$. Both the upper and lower bounds in~\cite{RivestMKWS80} are expressed in terms of partial sums of binomial coefficients. Formally, we define:
\[
  \mch{N}{m}   := \sum_{j=0}^m {N \choose j}, \  \textrm{for $m \leq N$}.
\]

We are interested, in particular, in the following game played over a continuous space:

\bigskip

\noindent \textbf{\cSearch game:} In this game, $x$ is a real number with $x\in {\cal D}=(0,1]$, and the questioner asks $k$ queries, at most $H$ of which may receive erroneous responses. The objective of the questioner is to find an interval $I_x$ such that $x\in I_x$ and $|I_x|$ is minimized.

\begin{lemma}~\cite{RivestMKWS80}\label{lem:csearch}
Any questioner's strategy for \cSearch with $H \leq k/2$ is such that $|I_{x}| \geq \multiset{k}{H}/2^k$. Moreover, for $H\leq k/2$, there is a strategy, named \CWeighting, that uses comparison queries and outputs an interval $I_{W,x}$ with $|I_{W,x}| \leq \multiset{k-H}{H}/2^{k-H}$.
\end{lemma}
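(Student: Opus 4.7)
The plan is to reduce both directions to a Berlekamp-style potential on the candidate space. For any game state with $q$ queries remaining, and a candidate point $y \in (0,1]$ that has been contradicted by $l(y)$ answers so far, define its weight $w_q(y) := \sum_{j=0}^{H - l(y)} \binom{q}{j}$ (and $0$ whenever $l(y) > H$), and the global potential $\Phi(q) := \int_0^1 w_q(y)\,dy$. Initially $\Phi(k) = \multiset{k}{H}$, and once $q = 0$ every alive candidate contributes exactly $\binom{0}{0} = 1$, so $\Phi(0)$ equals the Lebesgue measure of the alive set, which lower-bounds $|I_x|$ because any reported interval must contain every candidate consistent with the transcript and at most $H$ lies.

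For the lower bound I will show that against any questioner's strategy, an adversarial responder can force $\Phi$ to drop by at most a factor of two per query. For any queried set $S$, each point $y$ lies on one side, and the two possible responses produce potentials satisfying $\Phi^{\text{yes}}(q-1) + \Phi^{\text{no}}(q-1) = \Phi(q)$; this follows pointwise from the Pascal identity $\binom{q}{j} = \binom{q-1}{j-1} + \binom{q-1}{j}$, since a point's weight changes from $\sum_{j=0}^{H-l} \binom{q-1}{j}$ on a consistent answer to $\sum_{j=0}^{H-l-1} \binom{q-1}{j}$ on an inconsistent one, and these sum to $\sum_{j=0}^{H-l} \binom{q}{j}$. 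The responder always selects the answer whose potential is at least $\Phi(q)/2$, yielding $\Phi(0) \geq \multiset{k}{H}/2^k$ after all $k$ queries, and hence $|I_x| \geq \multiset{k}{H}/2^k$ as claimed.

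For the upper bound I would specify \CWeighting as a halving-based strategy restricted to comparison queries. The map $a \mapsto \int_0^a w_q(y)\,dy$ is continuous and weakly monotone in $a$, so by the intermediate-value theorem there is a cutoff $a^\star$ at which $\Phi^{\text{yes}}(a^\star) = \Phi^{\text{no}}(a^\star) = \Phi(q)/2$; asking ``is $x \leq a^\star$?'' contracts $\Phi$ by exactly a factor of two, whatever the response. After $k$ queries this gives $\Phi(0) \leq \multiset{k}{H}/2^k$, and to convert this bound on the Lebesgue measure of the alive set into the claimed bound on the length of $I_{W,x}$, I would combine it with the elementary binomial inequality $\multiset{k}{H} \leq 2^H \multiset{k-H}{H}$ (obtained from Vandermonde's identity $\binom{k}{j} = \sum_i \binom{H}{i}\binom{k-H}{j-i}$ by crudely replacing each inner partial sum with the full $\multiset{k-H}{H}$), together with a structural argument that the alive set is contained in an interval of comparable length: since the lie-count function along the real line is a $\pm 1$-walk at the sorted query cutoffs, placing every halving cutoff inside the current alive region prevents the walk from re-entering the alive regime after it has left, forcing the alive set into a single subinterval of $(0,1]$.

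The main obstacle is this very last step: bounding the convex hull of the alive set, rather than just its Lebesgue measure, because a priori the lie-count function $l(y)$ can dip back across the threshold $H$, so the alive set could have disjoint components with a much larger hull than total measure. Controlling the hull through the specific choice of halving cutoffs, and ensuring by induction that the strategy keeps the alive set contiguous, is the delicate part; by comparison, the Pascal-identity computation and the adversary-commitment argument for the lower bound are standard once the potential is set up correctly, reducing essentially to the invariant that $\Phi(q) > 0$ forces the alive set to have positive measure throughout.
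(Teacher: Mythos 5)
The paper itself gives no proof of this lemma; it is quoted verbatim from~\cite{RivestMKWS80}, so your proposal has to stand on its own. Your lower-bound half does: it is exactly the Berlekamp/Rivest-et-al.\ weighting argument, with the Pascal-identity splitting $\Phi^{\mathrm{yes}}+\Phi^{\mathrm{no}}=\Phi(q)$, the responder keeping the heavier half, and the observation that any valid output interval must contain every candidate still consistent with the transcript under at most $H$ lies, hence $|I_x|\geq \Phi(0)\geq \multiset{k}{H}/2^k$. That argument is complete and, usefully, applies to arbitrary subset queries as the lemma requires.

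The upper-bound half, however, has a genuine gap, which you partly flag yourself. Exact weight-halving controls only the Lebesgue \emph{measure} of the alive set; the strategy must output a single interval containing $x$, so what needs to be bounded is the convex hull, and dead gaps have zero weight but positive length, so the potential cannot see them. Your proposed repair --- that placing each cutoff inside the current alive region keeps the alive set contiguous (``prevents the walk from re-entering the alive regime'') --- is false as stated. Take $H=1$: ask ``is $x\leq 0.3$?'' and receive yes, then ask ``is $x\leq 0.7$?'' and receive no; both cutoffs lie inside the then-alive region $(0,1]$, yet the lie counts on $(0,0.3]$, $(0.3,0.7]$, $(0.7,1]$ become $1,2,1$, so the alive set is $(0,0.3]\cup(0.7,1]$ --- disconnected, with hull the whole interval. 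Nothing in your write-up shows that the specific weight-median cutoffs avoid such configurations, and this is precisely the delicate point: if your plan did go through, exact halving would give an interval of length $\multiset{k}{H}/2^k$, matching the volume lower bound exactly, whereas the guarantee in the lemma is the weaker $\multiset{k-H}{H}/2^{k-H}$ --- that slack is exactly the price \CWeighting pays in~\cite{RivestMKWS80} to force the final answer to be one interval, and it is not recovered by combining the measure bound with $\multiset{k}{H}\leq 2^H\multiset{k-H}{H}$ plus an (unproven, and in general false) contiguity claim. So the second assertion of the lemma remains unproven in your proposal.
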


The following game will be useful in our analysis of online time-series and fractional knapsack. 
\bigskip

\noindent \textbf{\Range game:}
In this game, given $k$ and $H\leq k/2$, and ${\cal D}=\{1,\ldots,m\}$, the objective of to find an unknown $x \in {\cal D}$, using $k$ queries, up to $H$ of which may be answered incorrectly. 

The proof of the following theorem is direct from Lemma~\ref{lem:csearch}:
\begin{theorem}\label{th:range}
The largest positive integer $\mu(k,H)$ such that a questioner can identify any number $x \in \{1,2,\ldots, \mu(k,H)\}$ in the \Range game is such that 
$2^{k-H}/\multiset{k-H}{H} \leq \mu(k,H) \leq   2^k/\multiset{k}{H}$. 
\end{theorem}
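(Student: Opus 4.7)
The plan is to establish both bounds by reducing between the continuous \cSearch and the discrete \Range games via the natural embedding of $\{1,\ldots,\mu\}$ into $(0,1]$ that partitions the interval into $\mu$ cells of width $1/\mu$.

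For the upper bound $\mu(k,H)\le 2^k/\multiset{k}{H}$, I would argue by contradiction. Assume there is a \Range strategy identifying any $x\in\{1,\ldots,\mu\}$ using $k$ queries with at most $H$ errors, where $\mu > 2^k/\multiset{k}{H}$. To convert it into a \cSearch strategy, given the continuous target $y\in(0,1]$, set $x:=\lceil y\mu\rceil$, so that $y$ lies in the cell $((x-1)/\mu,\,x/\mu]$. Each subset query ``is $x\in S$?'' of the \Range strategy is posed to the \cSearch responder as the equivalent subset query ``is $y\in\bigcup_{i\in S}((i-1)/\mu,i/\mu]$?'', and the response (truthful or erroneous) is forwarded verbatim. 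The simulation recovers $x$, and the questioner outputs the cell of length $1/\mu<\multiset{k}{H}/2^k$, contradicting the lower bound of Lemma~\ref{lem:csearch}.

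For the lower bound $\mu(k,H)\ge 2^{k-H}/\multiset{k-H}{H}$, I would construct a \Range strategy for $\mu:=\lfloor 2^{k-H}/\multiset{k-H}{H}\rfloor$. Each candidate $i\in\{1,\ldots,\mu\}$ is identified with the midpoint $y_i:=(i-1/2)/\mu$ of its cell. Given the unknown $x$, the questioner simulates \CWeighting on the implicit target $y_x$: each comparison query ``is $y\le a$?'' is translated to the equivalent query ``is $x\le \lfloor a\mu+1/2\rfloor$?'' on $x$, and the \Range responder's answer is fed back to \CWeighting as if produced by an oracle for $y_x$. By Lemma~\ref{lem:csearch}, \CWeighting returns an interval $I$ of length at most $\multiset{k-H}{H}/2^{k-H}\le 1/\mu$ that contains $y_x$. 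Since consecutive midpoints $y_i$ are spaced exactly $1/\mu$ apart, $I$ meets exactly one of them, namely $y_x$, from which $x$ is recovered.

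The main obstacle is the boundary case when $\multiset{k-H}{H}/2^{k-H}=1/\mu$, where an interval of length $1/\mu$ could in principle touch two adjacent midpoints. This is ruled out because the $y_i$ sit at cell midpoints, so an interval of length at most $1/\mu$ containing $y_x$ strictly separates $y_x$ from $y_{x\pm 1}$ unless it is placed with $y_x$ exactly on an endpoint; an infinitesimal perturbation of the thresholds (or, equivalently, decreasing $\mu$ by one) resolves this without affecting the stated bound. The remaining steps are routine bookkeeping consisting of translating queries and responses across the two games, together with an appeal to Lemma~\ref{lem:csearch}.
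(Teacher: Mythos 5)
Your proof is correct and is essentially the paper's own argument: Theorem~\ref{th:range} is presented there as following directly from Lemma~\ref{lem:csearch}, namely via exactly the discretization you describe, with the upper bound on $\mu(k,H)$ obtained by simulating a \Range strategy inside \cSearch (an output interval of length $1/\mu<\multiset{k}{H}/2^k$ would contradict the lemma) and the lower bound by running \CWeighting on a discretized target and translating comparison queries. The one point to tighten is your boundary case: rather than ``decreasing $\mu$ by one'' (which would weaken the bound exactly when $2^{k-H}/\multiset{k-H}{H}$ is an integer), place the $\mu$ representative points in $(0,1]$ with pairwise spacing strictly larger than $\multiset{k-H}{H}/2^{k-H}$ --- for instance spacing $1/(\mu-1)>1/\mu$ --- so that the interval returned by \CWeighting always contains exactly one representative and $x$ is recovered unambiguously.
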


We define two further games that will be of interest to our analysis. The first is related to searching in cyclic permutations, and will be useful in the upper-bound analysis of online bidding.
\bigskip

\noindent
{\bf \mincycl game:} \ Given an array $A[0\ldots n-1]$ whose elements are an unknown cyclic permutation of $\{0, \ldots ,n-1\}$, the objective is to use $k$ queries, at most $H\leq k/2$ of which can be erroneous, so as to output an index of the array whose element is as small as possible.

\begin{theorem}[Appendix]
There is a questioner's strategy for \mincycl based on $k$ comparison queries that 
outputs an index $j$ such that $A[j] \leq \lceil n\mch{k-H}{H}/2^{k-H} \rceil$, for all $H \leq k/2$.
\label{thm:cyclic.upper}
\end{theorem}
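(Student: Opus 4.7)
The plan is first to observe that a cyclic permutation of $\{0, \ldots, n-1\}$ is determined by an unknown shift $s \in \{0, \ldots, n-1\}$ via $A[i] = (i - s) \bmod n$, so outputting an index $j$ with $A[j] \le d$ is equivalent to outputting $j$ on the forward arc $\{s, s+1, \ldots, s+d\} \pmod{n}$. Thus the task reduces to approximately locating $s$ on the cyclic index set, and I would do so by simulating the strategy from Theorem~\ref{th:range} using array comparison queries.

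The key query-simulation step is to note that, for any $\tau \ge 1$, the query ``is $A[\tau] \le A[0]$?'' is true iff $s \in \{1, \ldots, \tau\}$; by analogous choices of indices (using index $0$ as a pivot), any question of the form ``is $s \in [p, q] \pmod{n}$?'' can be realized as a single array comparison. Consequently every comparison query used by the \Range strategy on the cyclic domain can be executed at the cost of exactly one array comparison, and erroneous responses translate directly, with the tolerance $H$ preserved.

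Next, I would set $\ell := \lceil n \mch{k-H}{H}/2^{k-H} \rceil$ and partition $\{0, 1, \ldots, n-1\}$ into $L := \lceil n/\ell \rceil$ consecutive groups $G_0, G_1, \ldots, G_{L-1}$, each of size at most $\ell$. Since $H \le k/2$ implies $\mch{k-H}{H}/2^{k-H} \le 1$, the partition is non-trivial; moreover, $L \le \lceil 2^{k-H}/\mch{k-H}{H} \rceil \le \mu(k,H)$ by Theorem~\ref{th:range} together with the integrality of $\mu(k,H)$. Hence the \Range strategy correctly identifies the group $G_{t^\star}$ that contains $s$ using $k$ queries with at most $H$ errors. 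Outputting $j$ equal to the right endpoint of $G_{t^\star}$ then gives $A[j] = j - s \le |G_{t^\star}| - 1 \le \ell - 1 \le \lceil n \mch{k-H}{H}/2^{k-H} \rceil$, which is the claimed bound.

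The main obstacle I expect is the query-simulation step: the \Range strategy may internally maintain states whose natural queries are not prefix arcs on the cycle. My plan is to reindex the groups linearly so that the strategy only ever asks questions of the form ``is the group index of $s$ at most $t$?'', which translates to ``is $s \le (t+1)\ell - 1$?'' and can therefore be realized as one array comparison via the identity above. A minor secondary concern is the rounding when $n$ is not divisible by $\ell$, but the last group is then only smaller, so the bound is unaffected.
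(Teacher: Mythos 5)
Your overall route is essentially the paper's: partition the $n$ cyclic positions into roughly $2^{k-H}/\mch{k-H}{H}$ consecutive blocks of length about $n\mch{k-H}{H}/2^{k-H}$, run the lie-tolerant search of Theorem~\ref{th:range} (via the \CWeighting{} strategy of Lemma~\ref{lem:csearch}) to locate the block containing the position $s$ of the element $0$, and output the right endpoint of that block. Your counting step, $L=\lceil n/\ell\rceil\le\lceil 2^{k-H}/\mch{k-H}{H}\rceil\le\mu(k,H)$ by integrality, is in fact handled more carefully than in the paper's own proof, which works with $m=2^{k-H}/\mch{k-H}{H}$ intervals directly.

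There is, however, one genuine (though easily repaired) failure in your query-simulation step. The identity ``is $A[\tau]\le A[0]$?'' holds iff $s\in\{1,\ldots,\tau\}$, which coincides with ``is $s\le\tau$?'' only when $s\neq 0$. If $s=0$, i.e.\ the pivot index itself carries the minimum, every truthful simulated response is ``no'', the transcript is indistinguishable from $s$ lying in the last block, and your strategy outputs $j=n-1$ with $A[j]=n-1$, violating the claimed bound. The fix stays entirely within your framework: your pivot comparisons faithfully answer comparison queries about $s'\in\{1,\ldots,n\}$, where $s'=s$ for $s\ge 1$ and $s'=n$ for $s=0$; so define the blocks on $\{1,\ldots,n\}$ and output the right endpoint of the identified block reduced modulo $n$ (equivalently, output index $0$ whenever the last block is identified). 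Then $A[j]\le \ell$ holds in all cases, including $s=0$. Note that the paper sidesteps this issue because its comparison queries are posed directly about the hidden index $x$ with $A[x]=0$ (``is $x\le f(b)$?''), rather than being realized as comparisons between array entries; your extra simulation layer is a legitimate strengthening of the query model, but it is exactly where the boundary case arises.
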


Last, we define a game that is related to searching in general permutations, and it will be useful in establishing lower bounds on the competitiveness of online bidding.
\bigskip

\noindent
{\bf \search game:} \ Given an array, $A[0, \ldots ,n-1]$ whose elements are an unknown permutation
of $\{0, \ldots ,n-1\}$, the objective is to use $k$ queries, at most $H$ of which can be erroneous, so as to output an index of the array whose element is as small as possible. 
 
\bigskip

\begin{theorem}[Appendix]
For any questioner's strategy for the \search game, there is a responder's strategy such that if $e$ is the element of $A$ that is returned, then $A[e] \geq \lfloor n\mch{k}{H}/2^{k} \rfloor$. 
\label{thm:search.lower}
\end{theorem}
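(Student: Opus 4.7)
The plan is to adapt the classical Berlekamp-style weight argument for R\'enyi--Ulam games to the permutation setting, and combine it with a pigeonhole-style count over the value $A[e]$ returned by the questioner's output $e$.

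First I would set up the adversary as a state adversary over permutations. For every permutation $\pi$ of $\{0,\ldots,n-1\}$ and after the first $i$ queries, let $\ell_\pi(i)$ denote the number of already-issued responses that would be lies if $\pi$ were the true permutation, and define the potential
\[
W_i \;=\; \sum_{\pi} \mch{k-i}{\,H-\ell_\pi(i)\,},
\]
with the convention $\mch{a}{b}=0$ whenever $b<0$. Initially $W_0 = n!\,\mch{k}{H}$. A subset query splits the permutations into $S$ and its complement; if the adversary answers ``yes'' then every $\pi\notin S$ accrues an extra lie, and symmetrically for ``no''. Using the Pascal-type identity $\mch{k-i}{H-\ell} = \mch{k-i-1}{H-\ell}+\mch{k-i-1}{H-\ell-1}$, a direct calculation gives $W_{i+1}^{\text{yes}} + W_{i+1}^{\text{no}} = W_i$, so the adversary can always pick the side with $W_{i+1}\ge W_i/2$. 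Iterating $k$ times yields $W_k \ge n!\,\mch{k}{H}/2^k$.

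Next I would translate the potential at time $k$ into a count of surviving permutations. Since $\mch{0}{j}=1$ for $j\ge 0$ and $0$ otherwise, we have $W_k = |\{\pi : \ell_\pi(k)\le H\}|$, i.e., $W_k$ is exactly the number of permutations consistent with at most $H$ lies given the transcript of responses. Hence there are at least $n!\,\mch{k}{H}/2^k$ surviving permutations, and the adversary is free to name any of them at the end as the ``true'' permutation.

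Finally I would close the argument with a pigeonhole step on the questioner's output. Suppose the questioner returns index $e$, and let $T$ be the smallest value such that $A[e] < T$ for every surviving permutation. Then each surviving permutation assigns to position $e$ a value in $\{0,1,\ldots,T-1\}$, and for each such value $v$ there are exactly $(n-1)!$ permutations with $\pi(e)=v$. Therefore
\[
T \cdot (n-1)! \;\ge\; |\{\text{surviving }\pi\}| \;\ge\; n!\,\frac{\mch{k}{H}}{2^k},
\]
which yields $T \ge n\,\mch{k}{H}/2^k$ and hence $T \ge \lceil n\mch{k}{H}/2^k\rceil \ge \lfloor n\mch{k}{H}/2^k\rfloor$. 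Consequently the adversary can exhibit a surviving permutation with $A[e] \ge \lfloor n\mch{k}{H}/2^k\rfloor$, proving the theorem.

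The main obstacle is the first step: verifying carefully that the potential $W_i$ behaves additively under the two possible responses, so that halving per query is genuinely achievable even when the queries are arbitrary subset queries (not only comparisons). The pigeonhole conclusion is clean precisely because the search space is all of $S_n$, giving the uniform count $(n-1)!$ for each value of $\pi(e)$; this uniformity is what lets us convert a lower bound on the number of surviving permutations into a lower bound on $A[e]$.
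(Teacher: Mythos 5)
Your proof is correct in substance, and it takes a genuinely different route from the paper's. The paper proves Theorem~\ref{thm:search.lower} by a reduction to a continuous search game on $[0,n)$ (a rescaled \cSearch game): array indices are mapped to unit intervals, the algorithm's queries are translated to interval queries, and if the returned index held too small a value, the simulating questioner would localize the hidden real $r$ to within less than $n\mch{k}{H}/2^k$, contradicting Lemma~\ref{lem:csearch}. You instead give a self-contained Berlekamp-style volume argument directly over the $n!$ permutations: the potential $W_i=\sum_\pi \mch{k-i}{H-\ell_\pi(i)}$ splits additively over the two possible answers to any subset query (via the Pascal-type identity for partial binomial sums), so the adversary can retain at least half the weight per query; $W_k$ then counts exactly the permutations consistent with at most $H$ lies, and the uniform count of $(n-1)!$ permutations per value at the returned index converts this surviving mass into a lower bound on $A[e]$. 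Your route buys two things: it does not import the continuous-game bound of~\cite{RivestMKWS80}, and it handles completely arbitrary yes/no (subset) queries about the permutation with no translation step, which is precisely the generality invoked later in Theorem~\ref{thm:noisy.lower}; the paper's reduction is shorter, but only because Lemma~\ref{lem:csearch} is taken as given.

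One small imprecision in your final sentence: from $T\geq \lceil n\mch{k}{H}/2^k\rceil$ you can only exhibit a surviving permutation with $A[e]=T-1\geq \lceil n\mch{k}{H}/2^k\rceil-1$, which equals $\lfloor n\mch{k}{H}/2^k\rfloor$ unless $n\mch{k}{H}/2^k$ is an integer, in which case you lose $1$. This is harmless: in that integral corner the bound as literally stated cannot hold anyway (take $n=2$, $k=1$, $H=0$: a single truthful query locates the minimum, yet the claimed bound is $1$), and the paper's own proof only derives a contradiction from the assumption $A[e]<\lfloor n\mch{k}{H}/2^k\rfloor-1$, so it carries the same additive-one slack; the application in Theorem~\ref{thm:noisy.lower} is unaffected by a $\pm 1$ in this quantity.
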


\section{A warm-up: Online time-series search}
\label{sec:ts}

The {\em online (time series) search} problem formulates a simple, yet fundamental setting in decision-making under uncertainty. 
In this problem, a player must sell an indivisible asset within a certain time horizon, e.g., within a certain number of days $d$, that is unknown to the player. On each day $i$, a {\em price} $p_i$ is revealed, and the player has two choices: either accept the price, and gain a profit $p_i$ (at which point the game ends), or reject the price (at which point the game continues to day $i+1$). If the player has not accepted a price by day $d$, then it accepts by default the last price $p_d$. The competitive ratio of the player's algorithm is the worst-case ratio, over all price sequences, of the maximum price in the sequence divided by the price accepted by the player.

The problem was introduced and studied in~\cite{el2001optimal} that gave a simple, deterministic algorithm that achieves a competitive ratio equal to $\sqrt{M/m}$, where $M,m$ are upper and lower bounds on the maximum and minimum price in the sequence, respectively, and which are assumed to be known to the algorithm. This bound is optimal for deterministic algorithms. 
Time-series search is a basic paradigm in online financial optimization, and several variants and generalizations have been studied~\cite{damaschke2009online, lorenz2009optimal, xu2011optimal, clemente2016advice}; see also the survey~\cite{li2014online}. The problem has also been used as a case study for evaluating several performance measures of online algorithms, including measures alternative to competitive analysis~\cite{boyar2014comparison,ahmad2021analysis}. 

Time-series search was recently studied under the imperfect advice framework in~\cite{DBLP:conf/aaai/0001K22}, who showed an upper bound of  $(M/m)^{2^{2H-k/2}}$ on the competitive ratio with $k$-bit advice and tolerance $H$, under the assumption that $H\leq k/4$. Note that no upper bound is known for
$H \in (k/4, k/2]$. If the advice is error-free, i.e., in the advice-complexity model, then a tight bound on the competitive ratio equal to $(M/m)^\frac{1}{2^k+1}$ is due to~\cite{clemente2016advice}.

We show the following result, as an application of the \Range game discussed in Section~\ref{sec:core}.

\begin{theorem}
Consider the online time series search problem, with imperfect advice of size $k$ and tolerance $H\leq k/2$. There is an algorithm that uses $k$ comparison queries, and that has competitive ratio at most 
$(M/m)^\frac{1}{U+1}$, where $U={\lfloor 2^{k-H}/\multiset{k-H}{H}}\rfloor$, for any $H\leq k/2$. In contrast, every (deterministic) algorithm based on $k$ subset queries has competitive ratio less than 
$(M/m)^\frac{1}{L+1}$, where $L={\lceil 2^k/\multiset{k}{H}}\rceil$.
\label{thm:ts}
\end{theorem}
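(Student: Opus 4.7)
I plan to discretize $[m,M]$ using $U$ geometric reservation prices $r_i := m(M/m)^{i/(U+1)}$ for $i \in \{1,\ldots,U\}$, with the convention $r_{U+1} := M$, so that consecutive reservations differ by the multiplicative factor $(M/m)^{1/(U+1)}$. Let $p^*$ be the maximum price of the input and define $i^* := \max\{i : r_i \leq p^*\}$, with $i^* := 1$ when $p^* < r_1$. Since $U \leq \mu(k,H)$ by Theorem~\ref{th:range}, the algorithm can decode $i^*$ from its $k$ comparison-based advice bits using the questioner strategy underlying Theorem~\ref{th:range}, tolerating up to $H$ errors. The algorithm then accepts the first price at least $r_{i^*}$, or the final price if none appears. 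A case split on whether $p^* \geq r_1$ or $p^* < r_1$ yields the claim: in the first case the accepted price lies in $[r_{i^*}, r_{i^*+1}]$, so the ratio is at most $r_{i^*+1}/r_{i^*} = (M/m)^{1/(U+1)}$; in the second case the accepted last price is at least $m$, so $p^*/m < r_1/m = (M/m)^{1/(U+1)}$.

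\textbf{Lower bound.} I plan a reduction from the \Range game on a domain of size $L+1$. Consider $L+1$ instances $\sigma_0,\ldots,\sigma_L$, where $\sigma_i$ presents the strictly increasing prices $p_0 < p_1 < \cdots < p_i$ with $p_j := m(M/m)^{j/(L+1)}$, so $p_0 = m$, $p_L = M$, and $p_j/p_{j-1} = (M/m)^{1/(L+1)}$. Suppose, for contradiction, that a deterministic algorithm $A$ using $k$ subset queries achieves competitive ratio strictly less than $(M/m)^{1/(L+1)}$ against every responder that commits at most $H$ lies. Then on $\sigma_i$ (for $i \geq 1$) the algorithm must accept a price strictly exceeding $p_{i-1}$, i.e., exactly $p_i$; on $\sigma_0$ it trivially accepts $p_0 = m$. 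The resulting accept-index equals $i$ on $\sigma_i$, giving a $k$-query questioner strategy that distinguishes any target in $\{0,\ldots,L\}$ despite up to $H$ erroneous responses. Hence $\mu(k,H) \geq L+1$; but Theorem~\ref{th:range} yields $\mu(k,H) \leq 2^k/\mch{k}{H} \leq L$ (as $L = \lceil 2^k/\mch{k}{H}\rceil$), a contradiction.

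\textbf{Main obstacle.} The delicate point is formalizing the reduction in the lower bound: each subset query that $A$ asks of the input sequence, restricted to the finite family $\{\sigma_0,\ldots,\sigma_L\}$, must be recast as a subset query on the index domain $\{0,\ldots,L\}$, and the $H$ corrupt advice bits must correspond precisely to $H$ responder lies. Once this identification is in place, the observation ``accept-index equals target'' closes the reduction immediately. All remaining pieces --- the geometric placement of reservations and prices, the identity $r_{i^*+1}/r_{i^*} = (M/m)^{1/(U+1)}$, and the two-case analysis of the competitive ratio --- are routine arithmetic.
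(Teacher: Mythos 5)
Your upper bound is essentially the paper's own argument: a geometric grid of $U$ reservation prices on $[m,M]$, identification of the correct grid cell via Theorem~\ref{th:range} with $k$ comparison queries and tolerance $H$, and the routine ratio analysis; this part is correct.

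The lower bound, however, has a genuine gap. Your instances $\sigma_i$ consist of the increasing prices $p_0 < \cdots < p_i$ and terminate at the maximum. Under the problem's rules, a player who has not accepted by the last day is forced to accept the last price; on your sequences that last price is the maximum $p_i$, so the trivial algorithm that never accepts (and ignores the advice entirely) is $1$-competitive on every $\sigma_i$. Consequently your claim that any algorithm with ratio strictly below $(M/m)^{1/(L+1)}$ ``must accept exactly $p_i$'' is false, the accept-index need not encode the hidden index, and no questioner strategy for \Range can be extracted from such an algorithm --- the contradiction never materializes. The fix, which is exactly what the paper does, is to cap each sequence with a low price: take $\sigma_i = m, a_1, \ldots, a_i, m$, so that waiting to the end yields only $m$ and incurs a ratio $a_i/m \geq (M/m)^{1/(L+1)}$; then a better-than-$(M/m)^{1/(L+1)}$-competitive algorithm really must commit at $a_i$, and its acceptance decision identifies $i \in \{1,\ldots,L+1\}$, contradicting Theorem~\ref{th:range}. (A minor slip, harmless to the structure: with your definition $p_L = m(M/m)^{L/(L+1)} \neq M$.) Your ``main obstacle'' paragraph about recasting subset queries on inputs as subset queries on the index domain is indeed the bookkeeping one must do, but it does not repair the issue above, which is about the instance family itself.
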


\begin{proof}
We first show the upper bound. Let  $a_1, \ldots a_U, r$ be defined such that 
$
r=\frac{a_1}{m} = \frac{a_2}{a_1} = \ldots = \frac{a_{U}}{a_{U-1}}=\frac{M}{a_U},
$
hence $r=(M/m)^{1/(U+1)}$. The algorithm uses $k$ comparison queries so as to find the best {\em reservation} price, in the set $\{a_i\}_{i=1}^U$, i.e., the threshold $p$ above which the algorithm will always accept a price in the sequence. This follows from Theorem~\ref{th:range}, since $U\leq 2^{k-H}/\multiset{k-H}{H}$. The algorithm then uses $p$ as its reservation price, namely it accepts the first price in the request sequence that is at least as large as $p$. From the definition of the set $\{a_i\}_{i=1}^U$, it easily follows that this algorithm has competitive ratio at most $r$, which completes the proof of the upper bound.

We now show the lower bound. By way of contradiction, suppose that there is an algorithm $A$ for time-series search with $k$-bit imperfect advice, and of competitive ratio less than $C=(M/m)^\frac{1}{L+1}$.  We will show that $A$ could then be used in the \Range game so as to identify, using $k$ queries, an unknown value in $\{1,\ldots,L+1\}$, which is a contradiction to the upper bound of Theorem~\ref{th:range}.

To arrive at the contradiction, define $a_1, \ldots ,a_{L}$ such that $r'=\frac{a_1}{m}=\frac{a_2}{a_1}=\ldots =\frac{a_{L}}{a_{L-1}}=\frac{M}{a_{L}}$, hence  $r'=(M/m)^\frac{1}{L+1}=C$.
Consider a game between the online algorithm $A$ and the adversary, in which the request sequences consist of prices in $\{m,a_1, \ldots, a_{L},M\}$. More precisely, consider the set of request sequences of the form $\sigma_i=m, a_1, \ldots ,a_i,m$, for all $i \in [1,L+1]$, 
where $a_{L+1}$ is defined to be equal to $M$. In $\sigma_i$, $A$ must accept price $a_i$ (the last request in the sequence) to be strictly less than $C$-competitive. Equivalently, $A$ uses $k$ queries
with at most $H$ errors, and finds $a_i$ in the set $\{a_j\}_{j=1}^{L+1}$, which contradicts Theorem~\ref{th:range}.
\end{proof}

\subsection{Comparison of the bounds}
\label{subsec:ts.comparisons}

In order to compare the upper and lower bounds of Theorem~\ref{thm:ts}, we need to be able to evaluate the partial sum of binomial coefficients. Since this partial sum does not have a closed form, we will rely on the following useful approximation from~\cite{macwilliams1977theory}. Let ${\cal H}$ denote the {\em binary entropy} function. Then

\begin{equation}
\frac{2^{N{\cal H}(\frac{m}{N})}}{\sqrt{8m(1-\frac{m}{N})}}
\leq \mch{N}{m} \leq 2^{N{\cal H}(\frac{m}{N})}, \quad \textrm{for $0<m<N/2$.} 
\label{eq:amazing}
\end{equation}

We will also use the following property of the binary entropy function
\begin{equation}
4p(1-p) \leq {\cal H}(p) \leq (4p(1-p))^{1/\ln 4}, \ \textrm{ for all $p \in (0,1)$}.
\label{eq:entropy}
\end{equation}

We first show that the algorithm of Theorem~\ref{thm:ts} improves upon the one of~\cite{DBLP:conf/aaai/0001K22}. First, note that~\cite{DBLP:conf/aaai/0001K22} assumes that $H\leq k/4$, whereas Theorem~\ref{thm:ts} applies to all $H\leq k/2$. Furthermore, we improve on the competitive ratio for all values of $H$ and $k$. For this, it suffices to show that 
${\multiset{k-H}{H}}/{2^{k-H}}<2^{2H-k/2}$, which, from~\eqref{eq:amazing} holds if 
$2^{(k-H)({\cal H}(\frac{H}{k-H})-1)}<2^{2H-k/2}$, or equivalently $(k-H)({\cal H}(\frac{H}{k-H})-1)<2H-k/2$. Let $\tau$ be such that $\tau=H/k$ (hence $\tau\leq 1/2$), then the latter is equivalent to showing that 
${\cal H}(\frac{\tau}{1-\tau}) <\frac{1+2\tau}{2-2\tau}$.
Using~\eqref{eq:entropy}, it suffices to show that
\[
(\frac{4\tau(1-2\tau)}{(1-\tau)^2})^{1/\ln 4} <\frac{1+2\tau}{2-2\tau},
\]
which holds for all $\tau\leq 1/2$.

Next, we investigate how close the upper and lower bounds of Theorem~\ref{thm:ts} are to each other. Recall that the bounds are of the form 
$(M/m)^{1/(U+1)}$, and $(M/m)^{1/(L+1)}$. Using~\eqref{eq:amazing}, and ignoring for simplicity the floors and ceilings, we obtain that 
\[
U\geq 2^{k(1-\tau)(1-{\cal H}(\frac{\tau}{1-\tau}))} \ \textrm{ and } L\leq \sqrt{8k\tau(1-\tau)}2^{k(1-{\cal H}(\tau))}.
\]
The above inequalities, along with~\eqref{eq:entropy} show that the upper and lower bounds are very close to each other, since for any fixed value of $\tau$, we have that 
$U\geq 2^{\Theta(k)}$ and $L \leq 2^{\Theta(k)}$.

\section{Online bidding}
\label{sec.bidding}

Online bidding was introduced in~\cite{ChrKen06} as a canonical problem for formalizing doubling-based strategies in online and offline optimization problems, such as searching for a target on the line, minimum latency, and hierarchical clustering. In this problem, a player 
wants to guess a hidden, unknown real value $u \geq 1$. To this end, the player defines an (infinite) sequence $X=(x_i)$ of positive, increasing {\em bids}, which is called its {\em strategy}. The {\em cost of discovering} the hidden value $u$ using the strategy $X$, denoted by $c(X,u)$, is defined to be equal to $\sum_{i=1}^{j_u} {x_i}$, where $j_u$ is such that $x_{j_u-1}<u\leq x_{j_u}$. Hence one  naturally defines the competitive ratio of the bidder's strategy $X$ as
$
\comp(X)= \sup_{u} \frac{c(X,u)}{u}.
$

In the standard version of the problem, i.e, assuming no advice, the doubling strategy $x_i=2^i$ achieves optimal competitive ratio equal to 4.
Online bidding was also studied under the untrusted advice model in~\cite{DBLP:conf/innovations/0001DJKR20}, which gave bounds on the consistency/robustness tradeoffs. The problem is also related to contract scheduling, studied in~\cite{DBLP:conf/aaai/0001K21}, see also the discussion in 
Section~\ref{subsec:bidding.comparisons}.

\subsection{Online bidding with imperfect advice}

\subsubsection{Upper bound}
\label{subsec:bidding.upper}

The idea behind the upper bound is as follows. We will consider bidding sequences from a space of 
$2^k$ geometrically-increasing sequences (see Definition~\ref{def:bunch.old}). In the ideal situation of perfect advice, the $k$ advice bits could be used to identify the best strategy in this space. In the presence of advice errors, we will show how to exploit the cyclic structure of this space, in conjunction with our upper bound for the {\sc MinCyclic} game (Theorem~\ref{thm:cyclic.upper}), so as to find a strategy that is not too far from the optimal. 

We first define the space of geometrically-increasing bidding sequences. 

\begin{definition}
For given $b>1$, and $l \in \mathbb{N}^+$ define ${\cal X}_{b,l}$ as the set of bidding sequences
$\{X_0, \ldots X_{l-1}\}$, in which $X_i=(b^{i+jl})_{j=0}^\infty$, for all $i\in [0,l-1]$.
\label{def:bunch.old}
\end{definition}

From the definition of ${\cal X}_{b,l}$, it is easy to see that for any potential target $u$, there is a cyclic permutation $\pi$ of $\{0, \ldots l-1\}$ which determines an ordering of the strategies in ${\cal X}_{b,l}$ in terms of their performance.  More precisely, suppose that $X_{\pi(0)}$ is the best sequence that discovers $u$ at least cost, say $C$. Then $X_{\pi(i)}$ discovers $u$ at cost at most
$b^iC$. This property can help us show the following upper bound:


\begin{theorem}[Appendix]
There is a bidding strategy based on $k$ comparison queries of competitive ratio at most
$
\frac{1+U}{2^k}\left(1+\frac{2^k}{1+U}\right) ^{1+\frac{1+U}{2^k}},  
\textrm{where $U=\lceil 2^H \mch{k-H}{H} \rceil$}.
$
\label{thm:noisy.upper}
\end{theorem}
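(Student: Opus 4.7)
The plan is to instantiate the family $\mathcal{X}_{b,2^k}$ of Definition~\ref{def:bunch.old}, for a base $b > 1$ to be optimized at the end, and then use Theorem~\ref{thm:cyclic.upper} to identify a strategy within this family whose cost on the hidden target $u$ is within a factor of $b^U$ of that of the best strategy in the family.

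First, I would make the cyclic ranking structure explicit. Given $u \geq 1$, let $m := \lceil \log_b u \rceil$ (so that $u \in (b^{m-1}, b^m]$), and for each $i \in \{0,\ldots,2^k-1\}$ write $s_i := (i - m) \bmod 2^k$. The first bid of the sequence $X_i \in \mathcal{X}_{b,2^k}$ that is at least $u$ is $b^{m + s_i}$, and therefore the total cost incurred by $X_i$ on $u$ is at most $\frac{b^{m + s_i + 2^k}}{b^{2^k}-1}$. In particular, the best sequence in the family, $X_{m \bmod 2^k}$, pays $C \leq \frac{b^{m + 2^k}}{b^{2^k}-1}$, and the array $A[i] := s_i$ is a cyclic permutation of $\{0,\ldots,2^k-1\}$ whose entries are precisely the ``ranks'' of the $2^k$ candidate sequences for this $u$.

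Next, I would apply Theorem~\ref{thm:cyclic.upper} to $A$ with $n = 2^k$. The $k$ advice bits provide the (possibly erroneous) responses to the comparison queries used by the \mincycl strategy, each of which is a legitimate query about $u$ and so can be encoded on an advice bit. The theorem then returns an index $j^\star$ satisfying
\[
A[j^\star] \;\leq\; \Bigl\lceil \tfrac{2^k \mch{k-H}{H}}{2^{k-H}} \Bigr\rceil \;=\; \bigl\lceil 2^H \mch{k-H}{H} \bigr\rceil \;=\; U,
\]
so the chosen strategy $X_{j^\star}$ incurs total cost at most $\frac{b^{m + U + 2^k}}{b^{2^k}-1}$ on $u$.

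Finally, combining this cost bound with $u > b^{m-1}$ gives competitive ratio at most $\frac{b^{U+2^k+1}}{b^{2^k}-1}$. A direct calculus exercise (set the derivative in $b$ to zero) shows that this quantity is minimized by the choice of $b$ satisfying $b^{2^k} = 1 + 2^k/(U+1)$; plugging this back in and rearranging yields exactly the claimed bound $\frac{1+U}{2^k}\bigl(1 + \tfrac{2^k}{1+U}\bigr)^{1 + (1+U)/2^k}$. The main subtlety is in the first step: verifying that the ranking of candidate strategies depends only on $m \bmod 2^k$ and forms a cyclic permutation, which is what enables the reduction to an instance of \mincycl of bounded size $n = 2^k$ and lets Theorem~\ref{thm:cyclic.upper} do the heavy lifting.
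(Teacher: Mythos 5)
Your proposal is correct and follows essentially the same route as the paper's own proof: instantiate the cyclic family ${\cal X}_{b,2^k}$, invoke Theorem~\ref{thm:cyclic.upper} with $n=2^k$ to land on a strategy of rank at most $U$, bound its cost by $b^U$ times that of the best strategy, and optimize $b$ to get the stated expression. The only difference is that you spell out the cyclic-ranking reduction and the geometric cost bounds explicitly, details the paper leaves implicit.
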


\subsubsection{Lower bound}
\label{subsec:bidding.lower}

The idea behind the lower bound is as follows. With $k$ advice bits, the best one can do is 
choose the best strategy from a set ${\cal X}$ that consists of at most $2^k$ strategies. Note that if the advice were error-free, $|{\cal X}|$ could be as large as $2^k$; however, in the presence of errors, the algorithm may choose to narrow $|{\cal X}|$.

Our approach combines two ideas. The first idea uses the abstraction of the \search game, and the lower bound of Theorem~\ref{thm:search.lower}. This result will allow us to place a lower bound on the rank of the chosen strategy, where the best strategy has rank 0. 
The second idea is to define a {\em measure} that relates how much worse a strategy of rank $j$ in ${\cal X}$ has to be relative to the best strategy in ${\cal X}$. We will accomplish this by appealing to the concepts of {\em parallelism} and {\em fault tolerance}.

 More precisely, given integers $p$, and $\phi$, with $\phi<p$, we define the {\em fault-tolerant parallel bidding} problem, denoted by $\mathtt{FPB}(p,\phi)$, as follows. The player is allowed to run, in parallel, $p$ bidding strategies; however, $\phi$ of these strategies can be {\em faulty},
 in that they never discover the target; e.g., we can think of a fault strategy as one in which the player abruptly stops submitting bids, at some point in time, akin to a ``byzantine'' failure. The cost of discovering a target $u$ is then defined as the minimum cost at which one of the $p-\phi$ non-faulty strategies discovers the target, noting that the faults are dictated by an adversary that aims to maximize this cost. The competitive ratio is defined accordingly.

The next theorem is the main technical result for $\mathtt{FPB}(p,\phi)$, which gives a lower bound on the competitive ratio of any strategy for this problem, as a function of the parameters $p$, $\phi$ and $\alpha_{\bar{X}}$. Here, $\bar{X}$ is defined as the sorted sequence of all bids in the $p$-parallel strategy $X$, in non-decreasing order. Moreover, given a sequence $X$ of positive reals, 
we define $\alpha_X$ to be equal to $\limsup_{i \to \infty} x_i^{1/i}$.

\begin{theorem}[Appendix]
Every $p$-parallel strategy $X$ for $\mathtt{FPB}(p,\phi)$ has competitive ratio 
$
\comp(X) \geq \frac{\alpha_{\bar{X}}^{p+1+\phi}}{\alpha_{\bar{X}}^p-1}.
$
\label{thm:multi-search.alpha.faulty}
\end{theorem}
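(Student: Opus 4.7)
Write $\alpha = \alpha_{\bar{X}}$, and observe that the target bound factors as $\alpha^{\phi+1} \cdot \frac{\alpha^p}{\alpha^p - 1}$. The plan is to derive each factor separately, focusing on test targets $u_n = \bar{x}_n + \epsilon$ for small $\epsilon > 0$ and large $n$. Let $q_j(n)$ denote the position in $\bar{X}$ of $X_j$'s first bid that is at least $u_n$; the $p$ values $q_j(n)$ are distinct integers $\geq n+1$, so when sorted as $q_{(1)}(n) < \cdots < q_{(p)}(n)$ one has $q_{(k)}(n) \geq n+k$. Let $e_{\phi+1}(n)$ denote the $(\phi+1)$-th smallest among the individual-strategy costs $c_j(u_n)$, which is exactly the cost attained under the worst-case choice of $\phi$ faulty strategies.

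The factor $\alpha^{\phi+1}$ comes from an ``adversary delay'' argument. Each of the $\phi$ strategies with cost strictly below $e_{\phi+1}(n)$ has its discovery bid at a position of $\bar{X}$ whose value is at most $e_{\phi+1}(n)$; together with the discovery position of the surviving strategy $\sigma$ (the one actually attaining $e_{\phi+1}(n)$) we thus obtain $\phi+1$ distinct positions of $\bar{X}$, all $\geq n+1$, at which the bid value is at most $e_{\phi+1}(n)$. The largest such position is therefore $\geq n+\phi+1$, which gives $e_{\phi+1}(n) \geq \bar{x}_{n+\phi+1}$. Dividing by $\bar{x}_n$ and passing to the limsup, the definition of $\alpha_{\bar{X}}$ yields $\bar{x}_{n+\phi+1}/\bar{x}_n \geq \alpha^{\phi+1}$ infinitely often, contributing this first factor.

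The factor $\alpha^p/(\alpha^p-1)$ comes from a Gal-style argument on the bids of the surviving strategy. Since $c_\sigma(u_n)$ equals the sum of $X_\sigma$'s bids up to and including its discovery bid at position $q_\sigma \geq n+\phi+1$, and since the $p$ strategies partition the positions of $\bar{X}$ (so at least one strategy's bid-positions have average density at most $1/p$), a pigeonhole choice of $\sigma$ allows us to lower-bound $c_\sigma(u_n)$ by the geometric tail $\bar{x}_{q_\sigma} + \bar{x}_{q_\sigma - p} + \bar{x}_{q_\sigma - 2p} + \cdots$. A classical limsup inequality for sequences of growth rate $\alpha$, applied to subsequences spaced by $p$, shows that this tail is at least $\bar{x}_{q_\sigma} \cdot \alpha^p/(\alpha^p-1)$ for infinitely many $n$. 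Chaining with the first factor produces $\comp(X) \geq \alpha^{p+1+\phi}/(\alpha^p-1)$.

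The main obstacle is that the surviving strategy $\sigma$ is not freely chosen by us: its identity depends on the adversary and on $n$, and it need not coincide with the specific ``sparse'' strategy the pigeonhole principle produces. To resolve this, the plan is to average the cost inequality over a block of $p$ consecutive test targets $u_n, u_{n+1}, \ldots, u_{n+p-1}$: over such a block the $p$ strategies collectively occupy all the surviving roles, so the block-aggregated cost unavoidably incorporates the geometric tail of at least one sparse strategy, and the Gal-style factor $\alpha^p/(\alpha^p-1)$ emerges at the block level. Taking $\limsup$ in $n$ then transfers this block-averaged inequality back to the required pointwise limsup bound on $\comp(X)$.
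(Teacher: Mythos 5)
Your factor-by-factor plan has two genuine gaps, and both are exactly the difficulties the paper's proof is built to avoid. First, the per-strategy tail bound is false as stated: the surviving strategy's cost is only the sum of \emph{its own} bids, and nothing forces those bids to hit $\bar{X}$ at positions at most $p$ apart below $q_\sigma$. A strategy can have a single huge bid preceded by almost nothing (e.g.\ $X_1=(100,200,\dots)$ against $X_0=(1,2,4,\dots)$ with $p=2$, $\phi=0$ and a target just above $64$: the surviving cost is $100$, while the spacing-$2$ tail of the merged sequence is $100+32+8+2>100$). The pigeonhole strategy of ``density at most $1/p$'' need not be the surviving one, and your block-averaging repair does not close this: the adversary can make the same strategy the $(\phi+1)$-cheapest for every target in the block, so the $p$ strategies do not ``collectively occupy all surviving roles,'' and the claim that the aggregated cost must contain some sparse strategy's geometric tail is never established. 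The paper sidesteps this entirely by applying the mediant inequality $\max_m a_m/b_m \geq (\sum_m a_m)/(\sum_m b_m)$ to the $p$ per-strategy ratios at a common cost budget, which puts the sum of \emph{all} bids up to index $q+\phi+1$ in the numerator and the window $\bar{x}_{q-p+1},\dots,\bar{x}_q$ in the denominator, i.e.\ it never needs to control a single strategy's prefix cost.

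Second, the pointwise inequalities you extract from $\alpha_{\bar{X}}=\limsup_i \bar{x}_i^{1/i}$ do not follow from that definition. For $\bar{x}_i=\alpha^i/i$ one has $\alpha_{\bar{X}}=\alpha$ but $\bar{x}_{n+\phi+1}/\bar{x}_n<\alpha^{\phi+1}$ for \emph{every} $n$, so ``infinitely often $\geq \alpha^{\phi+1}$'' is wrong (only a limsup version survives); for $\bar{x}_i=i\,\alpha^i$ the spaced tail is strictly below $\bar{x}_{q}\,\alpha^p/(\alpha^p-1)$ for every $q$, so the second claimed inequality fails pointwise as well. Even granting limsup versions of both factors, your final ``chaining'' needs them to hold along the same subsequence of targets, and a limsup of a product is not the product of limsups. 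This is precisely what Gal's functional theorem is for: the paper packages numerator and denominator into the single homogeneous functional $F_q(\bar{X})=\bigl(\sum_{i=0}^{q+\phi+1}\bar{x}_i\bigr)/\bigl(\sum_{i=q-p+1}^{q}\bar{x}_i\bigr)$, verifies Gal's conditions, and replaces $\bar{X}$ by the geometric sequence with ratio $\alpha_{\bar{X}}$, from which $\comp(X)\geq \alpha_{\bar{X}}^{p+1+\phi}/(\alpha_{\bar{X}}^p-1)$ follows by a direct computation. To fix your argument you would either have to prove a simultaneous-subsequence statement (essentially re-proving a special case of Gal's theorem) or abandon the two-factor decomposition in favor of the aggregated functional.
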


\begin{proof}[Proof sketch]
We use properties of $p$-parallel strategies so as to show that any such strategy satisfies $\comp(X) \geq \sup_{q} \frac{\sum_{i=0}^{q+\phi+1} \bar{x}_i}{\sum_{i=q}^{q-(p-1)} \bar{x}_{i}}$. We then use Gal's functional theorem~\cite{gal:general} to obtain the result. We refer to Appendix for many technical details.
\end{proof}

We now show how to obtain a lower bound for the problem by combining the above ideas. 
We emphasize a subtle point: unlike error-free advice of size $k$, where one should always choose the best strategy out of a collection of exactly $2^k$ strategies, it is conceivable that, in the presence of errors, this collection could very well be of size $l<2^k$. This is because, as $l$ decreases, so does the effect of errors on the competitive ratio. In other words, we need to establish the result for {\em all} values $l\leq 2^k$, and not only for
$l=2^k$.

\begin{theorem}
For every bidding sequence $X$ and $k$ subset queries in the imperfect advice model, 
we have $\comp(X) \geq \frac{1}{L}(1+L)^{1+1/L}$, where $L=2^k/\mch{k}{H}$.
\label{thm:noisy.lower}
\end{theorem}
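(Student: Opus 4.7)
My plan is to combine the two ingredients outlined in Section~\ref{subsec:bidding.lower}: the \search game lower bound (Theorem~\ref{thm:search.lower}), which governs the rank of the strategy the algorithm is forced to output, and the fault-tolerant parallel bidding bound (Theorem~\ref{thm:multi-search.alpha.faulty}), which converts this rank into a competitive-ratio lower bound. The remaining work is then to optimize over the algorithm's two free parameters: the pool size $l \leq 2^k$ of candidate strategies, and the growth rate $\alpha := \alpha_{\bar X} > 1$ of the merged sorted bid sequence $\bar X$.

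First, I would fix any algorithm and let ${\cal X} = \{X_0,\ldots,X_{l-1}\}$ with $l \leq 2^k$ denote the set of strategies it can output across advice strings; the algorithm is then a map from the $k$ (possibly erroneous) query answers into $\{0,\ldots,l-1\}$. For any target $u$ the strategies in ${\cal X}$ acquire a ranking by $c(\cdot,u)$, and viewing this ranking as a \search-style array permits invoking Theorem~\ref{thm:search.lower} with $n=l$: some adversarial joint choice of $u$ and $\leq H$ erroneous responses forces the chosen index to correspond to a strategy of rank at least $\phi := \lfloor l\mch{k}{H}/2^{k}\rfloor = \lfloor l/L\rfloor$. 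The main obstacle I anticipate is ensuring that the permutations realizable as cost-rankings of ${\cal X}$ as $u$ varies are rich enough to simulate the full \search responder's freedom to pick the array; I would handle this by placing $u$ inside suitably chosen gaps of $\bar X$, so that the required permutation arises as the cost-ranking for the resulting $u$.

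Second, I would observe that on the worst-case $u$ above, the algorithm's cost is at least the $\phi$-th smallest value of $c(X_i,u)$, which is precisely the cost incurred in $\mathtt{FPB}(l,\phi)$ when the adversary declares faulty the $\phi$ strategies in ${\cal X}$ cheapest for $u$. Applying Theorem~\ref{thm:multi-search.alpha.faulty} therefore yields $\comp(X) \geq \alpha_{\bar X}^{\,l+1+\phi}/(\alpha_{\bar X}^{\,l}-1)$. Finally, I would minimize the right-hand side over $l$ and $\alpha$: a short calculus computation gives the minimizer $\alpha^{l} = (l+1+\phi)/(1+\phi)$ with minimum value $\alpha^{1+\phi}(l+1+\phi)/l$. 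Substituting $\phi = l/L$ and letting $l \to \infty$ (the minimum is approached monotonically from above, as a direct check confirms), one has $\alpha^{l} \to L+1$, $\alpha^{1+\phi} = \alpha\cdot(\alpha^{l})^{1/L} \to (L+1)^{1/L}$, and $(l+1+\phi)/l \to (L+1)/L$, so the bound tends to $(1+L)^{1+1/L}/L$. Since every finite $l \leq 2^k$ yields a strictly larger value, the stated inequality follows.
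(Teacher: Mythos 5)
Your overall route is the same as the paper's: restrict the algorithm to a pool ${\cal X}$ of $l\le 2^k$ candidate strategies, use Theorem~\ref{thm:search.lower} to force the selected strategy to have rank at least $\phi_l=\lfloor l\mch{k}{H}/2^k\rfloor$, view the resulting situation as an instance of $\mathtt{FPB}(l,\phi_l)$ and invoke Theorem~\ref{thm:multi-search.alpha.faulty}, and finally optimize over $\alpha_{\bar X}$ and $l$. Your stationary-point computation ($\alpha^l=(l+1+\phi)/(1+\phi)$, with minimum value $\alpha^{1+\phi}(l+1+\phi)/l$) is correct, and so are the limit $\frac{1}{L}(1+L)^{1+1/L}$ and the monotonicity claim \emph{for the floorless expression}. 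Your worry about whether the cost rankings induced by targets $u$ are rich enough to simulate the adversary of Theorem~\ref{thm:search.lower} is reasonable, but the paper's own proof asserts that step at the same level of detail, so it is not a gap relative to the paper.

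The genuine gap is the step ``substituting $\phi=l/L$''. What you actually have is $\comp(X)\ge \alpha^{l+1+\lfloor l/L\rfloor}/(\alpha^l-1)$, and since $\lfloor l/L\rfloor\le l/L$ and $\alpha>1$, replacing the floor by $l/L$ \emph{increases} the exponent. Hence the quantity you analyze, $B(l):=\min_{\alpha>1}\alpha^{l+1+l/L}/(\alpha^l-1)$, is an upper bound on the true guaranteed value $T(l):=\min_{\alpha>1}\alpha^{l+1+\lfloor l/L\rfloor}/(\alpha^l-1)$, not a lower bound; showing that $B(l)$ decreases to $\frac{1}{L}(1+L)^{1+1/L}$ from above does not yield $T(l)\ge\frac{1}{L}(1+L)^{1+1/L}$, which is what the theorem requires. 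The repair is the one-line observation the paper uses to absorb the floor into the ``$+1$'': since $1+\lfloor l/L\rfloor\ge l/L$, the exponent satisfies $l+1+\phi_l\ge l(1+1/L)$, so $\comp(X)\ge \alpha^{l(1+1/L)}/(\alpha^l-1)$; setting $y=\alpha^l$ and minimizing $y^{1+1/L}/(y-1)$ over $y>1$ gives exactly $\frac{1}{L}(1+L)^{1+1/L}$ for \emph{every} $l\le 2^k$, with no limiting or monotonicity argument needed. (The paper phrases this as two cases, $l<L$ with $\phi_l=0$ and $l\in[L,2^k]$, but the single inequality above covers both.) With that substitution corrected, your argument closes and coincides with the paper's proof.
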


\begin{proof}
Every bidding strategy will use the query responses so as to select a strategy from a set 
${\cal X}=\{X_0, \ldots , X_{l-1}\}$ of candidate sequences,  for some $l \leq 2^k$. For a given target value $u$, there is an ordering of the $l$ sequences in 
${\cal X}$ such that $X_{\pi(i)}$ has no worse competitive ratio than $X_{\pi(i+1)}$, namely the permutation orders the sequences in decreasing order of performance. From Theorem~\ref{thm:search.lower}, it follows that the strategy will choose a sequence
$X_{j}$ such that $\pi(j)\geq \lfloor l\mch{k}{H}/2^{k} \rfloor$. The competitive ratio of the selected sequence is at least the competitive ratio of the $l$-parallel strategy defined by 
${\cal X}$, in which up to $\phi_l = \lfloor l\mch{k}{H}/2^{k} \rfloor$ sequences may be faulty. From Theorem~\ref{thm:multi-search.alpha.faulty},
\begin{equation}
\comp(X) \geq \frac{\alpha_{\bar{X}}^{l+1+\phi_l}}{\alpha_{\bar{X}}^l-1}, \quad \textrm{with $\phi_l = \lfloor l\mch{k}{H}/2^{k} \rfloor$}.
\label{eq:nasty}
\end{equation}
We now consider two cases. Suppose first that $l<L$. In this case,
case $\phi_l=0$, and therefore~\eqref{eq:nasty} implies that $\comp(X) \geq \alpha_{\bar{X}}^{l+1}/(\alpha_{\bar{X}}^l-1)$, which is minimized for $\alpha_{\bar{X}}=(l+1)^{1/l}>1$, therefore $\acc(X) \geq \frac{1}{l}(l+1)^{1+1/l}$. This function is decreasing in $l$, and since $l<L$ we have
$
\comp(X) \geq \frac{1}{L}(1+L)^{1+1/L}.
$
Next, suppose that $l \in [L,2^k]$. In this case,~\eqref{eq:nasty} gives 
$
\comp(X) \geq \frac{\alpha_{\bar{X}}^{l(1+1/L)}}{\alpha_{\bar{X}}^l-1}.
$ 
The above expression is minimized for $\alpha_{\bar{X}} =(1+L)^{1/l}$, and by substitution we obtain again
$
\comp(X) \geq \frac{1}{L}(1+L)^{1+1/L}.
$
\end{proof}

\subsubsection{Comparison of the bounds}
\label{subsec:bidding.comparisons}

In the Appendix we prove that the ratio between the two bounds is approximately
\[
\log \frac{\text{UB}}{\text{LB}} \leq \frac{\sqrt{8k\tau(1-\tau)} k(1-\tau)(1-{\cal H}(\frac{\tau}{1-\tau}))}
{2^{k(1-\tau)(1-{\cal H}(\frac{\tau}{1-\tau}))}} -
\frac{k(1-{\cal H}(\tau))}{2^{k(1-{\cal H}(\tau))}},
\]
where $\tau=H/k$. We infer that as $k$ increases, and for any fixed value of $\tau$, the upper and lower bounds become very close to each other.


\section{Online fractional knapsack}
\label{sec:fractional.knapsack}

In the online fractional knapsack problem, the request sequence consists of {\em items}, where item $i$ has a {\em value} $v_i \in \mathbb{R}^+$ and a {\em size} $s_i \in (0,1]$. The online algorithm has a knapsack of unit capacity, and when considering item $i$, it can accept irrevocably a fraction 
$f_i \in (0,1]$ of the item, subject to capacity constraints. More precisely, the algorithm aims to maximize $\sum\limits_{i} (f_i \cdot v_i)$ subject to $\sum\limits_{i} (f_i\cdot s_i) \leq 1$.

Let $d_i=v_i/s_i$ denote the {\em density} of item $i$. While the offline version of the problem admits a simple, optimal solution via a greedy algorithm (that sorts all items by non-decreasing order of density, and accepts items in this order until the knapsack is full), the online version is more challenging. Suppose that $d_i \in [L,U]$, for $L,U$ known to the algorithm. 
\cite{buchbinder2005online, buchbinder2006improved} gave matching $O(\log (U/L))$ and $\Omega(\log (U/L))$ upper and lower bounds on the competitive ratio of the problem, respectively, and~\cite{zhou2008budget} showed an optimal bound of $\ln (U/L)+1$ for deterministic algorithms. Online fractional knapsack has applications in sponsored search auctions, and online ad allocation, and has been studied in several other settings, e.g.,~\cite{albers2021improved,kesselheim2014primal}.
In this section, we study this problem in the imperfect advice setting.

\subsection{Upper bound}
\label{subsec:knapsack.upper}

As in all previous work, we assume that the density of all items is in $[L, U]$ for known values of $L$ and $U$. Let $d^*$ denote the smallest density of an item included at a positive fraction in the optimal solution. That is, the optimal algorithm \Opt accepts a fraction 1 of items with density larger than $d^*$, and fills the remaining space with a fraction of items of density $d^*$. Unfortunately, knowing $d^*$ (even its exact value) is not sufficient for an online algorithm to be anywhere as efficient as \Opt. For example, an algorithm that accepts a fraction 1 of items of density larger than $d^*$  has unbounded competitive ratio in sequences that consist only of items of density 
$d^*$.  Similarly, an algorithm that accepts a fraction 1 of items with density at least $d^*$ 
has unbounded competitive ratio in sequences in which items of density $d^*$ appear early in the sequence, and items of greater density later in the sequence.
However, if we denote by $c^* \in (0,1)$ the fraction of the knapsack in the optimal solution that is either empty, or occupied with items of density $d^*$, then knowing the exact value of both $d^*$ and $c^*$ suffices to achieve optimality. Our approach will then aim to use $k$ comparison queries so as to approximate the values of $c^*$ and $d^*$, then use these approximations to choose fractional items.

\subsubsection{Algorithm and analysis} 
We describe the online algorithm. We first define two types of partitions, related to the parameters $d^*$ and $c^*$. In what concerns $d^*$, 
partition the interval $[L,U]$ into $s$ 
sub-intervals $I_1,\ldots, I_{s}$ such that $I_i = [d_{i-1},d_{i})$, for $s$ that will be specified later. We also set $L=d_0, U=d_{s}$. The values $d_i$ are defined so that:
$\beta = \frac{d_1}{d_0} = \frac{d_2}{d_1} = \ldots = \frac{d_{s}}{d_{s-1}}$.
Thus, we have $\beta = (U/L)^{1/s}$ and $d_i = L\cdot \beta^{i}$, and note that 
$d^* \in I_x$ for some $x \in [1,s]$. 

In what concerns the parameter $c^*$, we partition the interval $[0,1]$ 
into $m$ sub-intervals $I'_1,\ldots, I'_{m}$ such that 
$I'_i = [c_{i-1},c_{i})$; we have $c_0 = 0$ and $c_{m} = 1$. The value of $m$ will be determined later; the values $c_i$ are defined so that
$c_1 = c_2 - \frac{c_1}{\beta} = c_3 - \frac{c_2}{\beta} = \ldots = c_m - \frac{c_{m-1}}{\beta}. $

It readily follows that for $i\geq 1$, we have $c_i = 
\frac{\beta^{m+i-1}-\beta^{m+i-2}}{\beta^{m}-1}$.
In particular, $c_1=\frac{\beta^{m}-\beta^{m-1}}{\beta^{m}-1}$, and $\frac{1}{1-c_1} = \frac{\beta^{m}-1}{\beta^{m-1}-1} $. Note also that $c^* \in I'_y$ for some $y \in [1,m]$.

Provided that $s\cdot m \leq \lfloor 2^{k-H}/\multiset{k-H}{H} \rfloor$, 
Theorem~\ref{th:range} shows that the algorithm can use $k$ comparison queries 
so as to identify both $x$ and $y$.
Given these values, the algorithm reserves, in its knapsack, 
a capacity $c = c_{y-1}$ for items with density in the range $I_x = [d_{x-1},d_x)$, to which we refer as \emph{critical items}. The algorithm uses the remaining capacity of $1-c$ for items of density larger than $d_x$, to which we refer as \emph{heavy items}. The algorithm accepts a fraction 1 of all critical items, as long as the capacity $c$ reserved for them allows. Similarly, the algorithm accepts a fraction 1 of heavy items and places them in their dedicated space of the knapsack. Given that $c^* \in I_y$, we have $1-c > 1-c^*$; that is, the reserved capacity for heavy items is at least equal to the total size of these items. In other words, the algorithm can afford to accept all heavy items. The algorithm rejects all items of density smaller than $d_{x-1}$. 

\begin{theorem}[Appendix]
For any $H\leq k/2$, the above algorithm has competitive ratio at most
\vspace{-0.3cm}

  \begin{align*}
    \min\limits_{s,m \in \mathbb{N}} \ \ \ \ f_m(\beta) \ \ \  \ \ \ &  \ {\emph{where \ \ \ }}  \beta = (U/L)^{1/s}, 
    \ \ \emph{and } \ 
    f_m(\beta)= \frac{\beta^{m}-1}{\beta^{m-1}-1} \\ 
  &  {\emph{subject to \ \  }}  s\cdot m \leq \lfloor 2^{k-H}/\multiset{k-H}{H} \rfloor. \\
  \end{align*} 
  \label{thm:knapsack.upper}
\end{theorem}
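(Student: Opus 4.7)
The plan is to split the proof into three steps. First, I would invoke Theorem~\ref{th:range} to confirm that $k$ comparison queries with up to $H$ errors are enough to identify the pair $(x,y)$: encode $(x,y)$ as a single integer in $\{1,\ldots,sm\}$ and apply the \Range subroutine, whose correctness is guaranteed by the hypothesis $sm\le\lfloor 2^{k-H}/\multiset{k-H}{H}\rfloor$. From here on I assume both $d^*\in I_x$ and $c^*\in I'_y$ are known to the algorithm.

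Second, I would split the items of any instance into the heavy ($\ge d_x$), critical ($\in[d_{x-1},d_x)$) and light ($<d_{x-1}$) classes. Because $d_{x-1}\le d^*<d_x$, \Opt rejects every light item and includes every heavy item in full; the algorithm also rejects every light item, and since the total size of heavy items $s_H$ satisfies $s_H\le 1-c^*\le 1-c_{y-1}$, its heavy reserve suffices to accept every heavy item as well. Consequently the heavy and light contributions cancel, and the competitive ratio is governed entirely by the critical bucket: $\mathrm{OPT}=v_H+V_C^{\mathrm{OPT}}$ and $\mathrm{ALG}=v_H+V_C^{\mathrm{ALG}}$.

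Third, I would upper-bound $V_C^{\mathrm{OPT}}\le d_x(1-c^*-s_H)+d^* c^*$ by applying the density cap $d_x$ to items in $(d^*,d_x)$ and multiplying the $d^*$-items by $c^*$, and I would lower-bound $V_C^{\mathrm{ALG}}\ge d_{x-1}\cdot\min(s_C,c_{y-1})$ because critical items have density at least $d_{x-1}$ and the algorithm reserves $c=c_{y-1}$ for them. If $s_C\le c_{y-1}$ then every critical item is accepted and the ratio is already at most~$1$. Otherwise, using $v_H\ge d_x s_H$, $d_x/d_{x-1}=\beta$, and the closed form $c_i=(\beta^m-\beta^{m-i})/(\beta^m-1)$ derived from the recurrence $c_i-c_{i-1}/\beta=c_1$ (so that $1/(1-c_1)=(\beta^m-1)/(\beta^{m-1}-1)$), the quotient $\mathrm{OPT}/\mathrm{ALG}$ reduces to an explicit function of the parameters $(s_H,d^*,c^*,y)$.

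The main obstacle is the final extremization of this quotient over the adversary's remaining freedom $s_H\in[0,1-c^*]$, $d^*\in[d_{x-1},d_x)$, $c^*\in[c_{y-1},c_y)$, and $y\in\{1,\ldots,m\}$. The worst case is expected at $y=m$, $s_H=0$, $c^*\to 1^-$, and $d^*\to d_x^-$, realised by the adversarial sequence that first releases $c_{m-1}$ units of density-$d_{x-1}$ critical items (exhausting the reserve on low-value items) and then one unit of density approaching $d_x$ which \Opt takes in full; here $\mathrm{OPT}\to d_x$ and $\mathrm{ALG}\to c_{m-1}\cdot d_{x-1}$, so the ratio attains $\beta/c_{m-1}=(\beta^m-1)/(\beta^{m-1}-1)=f_m(\beta)$. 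Verifying that no other corner of the feasible region exceeds this value reduces, after some algebra, to the inequality $c_{y-1}/c_1\le\beta/(\beta-1)$, which in turn is immediate from the explicit formula for~$c_i$; taking the minimum over admissible $s$ and $m$ then yields the stated bound.
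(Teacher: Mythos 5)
Your skeleton (locate the pair $(x,y)$ via the \Range game, cancel the heavy and light contributions, bound the critical bucket) matches the paper's, but the step that actually has to deliver the bound --- the extremization of $\mathrm{OPT}/\mathrm{ALG}$ over $(s_H,d^*,c^*,y)$ --- fails as you assert it. With your accounting, $V_C^{\mathrm{OPT}}\le d_x(1-c^*-s_H)+d^*c^*$ and $V_C^{\mathrm{ALG}}\ge d_{x-1}\min(s_C,c_{y-1})$, the worst corner is \emph{not} $y=m$, $c^*\to 1$. Take $y=1$ (so the reserve is $c_0=0$), $s_H=0$, and the instance consisting of total size $c^*<c_1$ of density $d_{x-1}$ together with total size $1-c^*$ of density $d_x-\epsilon$. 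Here $d^*=d_{x-1}$, the term $d_x(1-c^*-s_H)$ carries essentially all of \Opt's value, there are no heavy items, and the algorithm's critical reserve is $0$, so your quotient --- and indeed the algorithm itself, if $c^*$ is read literally as the space empty or occupied by density-$d^*$ items --- is unbounded. The inequality $c_{y-1}/c_1\le\beta/(\beta-1)$ you invoke says nothing about this corner, and your $c^*\to1$, $d^*\to d_x$ instance, while a correct tightness check, does not substitute for verifying the whole feasible region.

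The paper's proof is short precisely because the term $d_x(1-c^*-s_H)$ never appears: it uses $c^*$ as the fraction of \Opt's knapsack \emph{not} occupied by heavy (density $\ge d_x$) items, i.e.\ empty or filled with critical items. Then everything \Opt packs above the critical band is heavy, so the common heavy contribution satisfies $\Delta\ge(1-c^*)d_x\ge(1-c_y)d_x$, while $\mathrm{OPT}\le\Delta+c_yd_x$ and $\mathrm{ALG}\ge\Delta+c_{y-1}d_{x-1}$; monotonicity in $\Delta$ and the recurrence $c_y-c_{y-1}/\beta=c_1$ give the bound $1/(1-c_1)=f_m(\beta)$ uniformly in $y$, with no corner analysis at all. (The main text's phrase ``occupied with items of density $d^*$'' is admittedly ambiguous, and you took it literally; but under that literal reading the quantity located by the $y$-queries does not control \Opt's total critical occupancy, and the stated competitive ratio is simply not provable --- the instance above shows the analysis, not just the bound, breaks.) To repair your argument you must redefine $c^*$ (and hence what the second block of queries approximates) as \Opt's non-heavy fraction; once you do, your Step 3 collapses to the paper's two-line computation and the extremization you flag as the main obstacle disappears.
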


\vspace{-1.4cm}
\subsection{Lower bound}
We will show a lower bound $C(k,H)$ on the competitive ratio of any algorithm with imperfect advice. For the sake of contradiction, suppose there is an algorithm $A$ of competitive ratio better than
$C(k,H)$. Our proof is based on a reduction from the \Range game. Specifically, we prove that, based on $A$, we obtain a questioner's strategy for \Range which can find a value $z\in\{1,\ldots, p\}$, with $p=\lceil 2^k/\multiset{k}{H}\rceil+1$, which contradicts Theorem~\ref{th:range}.

We give the intuition behind the proof. 
Let $s$ and $m$ be any two positive integers such that $s\cdot m \leq p$ and $s\cdot (m+1) > p$.
Define $\beta = (U/L)^{1/s}$, and $d_i = U\cdot \beta^i$, for $i\in [1,s]$.
Given a pair $(x,y)$ of integers, where $x\in \{1,\ldots, s \}$ and $y\in \{1,\ldots m+1\}$, define the sequence $$\sigma_{x,y} = ( (d_1,1),(d_2,1),\ldots, (d_{x-1},1),(d_{x},c_y),$$
where $(d_i,j)$ indicates a subsequence of $j/\epsilon$ items, each of which has size $\epsilon$
and density $d_i$, and where $\epsilon$ is infinitesimally small. $c_y \in [0,1]$ is defined appropriately
in the proof. For this sequence, $\text{OPT}(\sigma_{x,y})=(1-c_y)d_{x-1}+c_yd_x$. There 
are $s\cdot (m+1) > p$ such sequences, and $\sigma_{x,y}$ is a prefix sequence of $\sigma_{x,y+1}$, and $\sigma_{x,m}$ is a prefix sequence of $\sigma_{x+1,1}$. In the proof, we consider request sequences of this form, and we show that if $A$ is $C(k,H)$-competitive, its decisions can help find any given $z\in\{1,\ldots,p\}$, which contradicts Theorem~\ref{th:range}. We refer to Appendix for the technical details.

\begin{theorem}[Appendix]
For the fractional knapsack problem, where items densities are in $[L, U]$, no deterministic algorithm with $k$ subset queries, out of which $H\leq k/2$ may have erroneous responses, can achieve a competitive ratio better than
\vspace{-0.2cm}
\begin{align*}
C(k,H) = \min\limits_{s,m \in \mathbb{N}} \ \ \ \ g_m(\beta) \ \ &  \ {\emph{where \ \ \ }}  \beta = (U/L)^{1/s},
\  \  g_m(\beta) = (\frac{\beta^2-\beta+1}{2\beta+1})^{1/(m+1)} \\
\hspace*{2cm} {\emph{subject to \ \  }} & s\cdot m \leq \lceil 2^{k}/\multiset{k}{H} \rceil+1.\\
\end{align*}
\label{thm:knapsack.lower}
\end{theorem}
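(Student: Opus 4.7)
The plan is to prove the lower bound via reduction from the \Range game. Assume for contradiction that there exists an algorithm $A$ with $k$ subset queries (at most $H$ erroneous) whose competitive ratio strictly exceeds $C(k,H)$. I will use $A$ as a subroutine to construct a questioner strategy that solves \Range for an unknown value in $\{1,\ldots,p\}$ with $p=\lceil 2^k/\mch{k}{H}\rceil+1$, which contradicts Theorem~\ref{th:range} since $p>\mu(k,H)$.

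Choose $s,m$ attaining the minimum in the definition of $C(k,H)$, so $\beta=(U/L)^{1/s}$, $g_m(\beta)=C(k,H)$, and $s(m+1)>p\geq sm$. Partition the density range geometrically via $d_i=L\beta^i$. Define thresholds $c_0=0<c_1<\cdots<c_{m+1}=1$ such that, for every pair $(x,y)$, $\text{OPT}(\sigma_{x,y})=(1-c_y)d_{x-1}+c_y d_x$, with the $c_y$ calibrated---analogously to the upper-bound scheme in Section~\ref{subsec:knapsack.upper}, but at the finer granularity dictated by $\mch{k}{H}$---so that the successive increments in $\text{OPT}$ along the nested chain are exactly what is needed to make the ratio target $g_m(\beta)$ tight. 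Because $s(m+1)>p$, fix any injection $z\mapsto(x(z),y(z))$ from $\{1,\ldots,p\}$ into the admissible pairs.

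The reduction simulates $A$ on the longest sequence $\sigma_{s,m+1}$ (of which each $\sigma_{x,y}$ is a prefix): the hidden $z$ specifies the adversary's truncation point. The questioner's strategy forwards each of $A$'s $k$ subset queries, re-encoded via $z\mapsto(x(z),y(z))$ as a subset of $\{1,\ldots,p\}$, to the \Range responder; the (possibly erroneous) answers are returned to $A$ verbatim, so each of the at most $H$ errors in the \Range game corresponds to exactly one error in $A$'s advice, preserving the tolerance. After simulating $A$ online over the full sequence, the questioner reads off $A$'s cumulative acceptance at each of the $p$ checkpoints $\sigma_{x(z'),y(z')}$, and outputs the unique $z$ whose profile is consistent with the hypothesis that $A$'s value exceeds $C(k,H)\cdot\text{OPT}(\sigma_{x(z),y(z)})$.

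The main obstacle is to prove that the $c_y$ can be chosen so that, under the hypothesis $A>g_m(\beta)\cdot\text{OPT}$ on every nested instance, the $p$ checkpoint profiles are pairwise distinct---i.e., the decoding step is well-defined. The idea is that within a single density bucket $d_x$, satisfying the ratio bound simultaneously on $\sigma_{x,1},\ldots,\sigma_{x,m+1}$ forces $A$ to commit fractional mass to $d_x$-items in $m+1$ distinguishable stages, one per threshold $c_y$; this is precisely where the exponent $1/(m+1)$ in $g_m(\beta)$ arises, by chaining per-bucket constraints. Across buckets, the forced transition from density $d_{x-1}$ to $d_x$ yields $s$ distinguishable bucket-level states, for $s(m+1)>p$ distinct overall profiles. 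Any two pairs $(x,y)\neq(x',y')$ thus produce distinct decision profiles, so the questioner decodes $z$ unambiguously---yet $k$ queries with up to $H$ errors can distinguish no more than $\mu(k,H)<p$ values by Theorem~\ref{th:range}, yielding the desired contradiction.
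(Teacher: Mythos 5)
Your high-level architecture is the same as the paper's: assume an algorithm $A$ beating $C(k,H)$, reduce from the \Range game over $p=\lceil 2^k/\mch{k}{H}\rceil+1$ values, build nested sequences $\sigma_{x,y}$ from geometric densities $d_i=L\beta^i$, simulate $A$ while re-encoding its subset queries as subset queries about the hidden $z$, and decode $(x,y)$ from $A$'s online behaviour. But there is a genuine gap exactly at the point you yourself flag as ``the main obstacle'': you never construct the thresholds $c_y$, and you never prove that the decoding is well-defined, and that is the entire mathematical content of the bound. The paper does both explicitly: it takes $c_0=\min\{1/\beta,1-1/\beta\}$ and $c_m=\max\{1/\beta,1-1/\beta\}$ and chooses the intermediate $c_j$ so that the optimal profits $c_j+\beta(1-c_j)$ (in units of $d_{x-1}$) form a geometric progression with ratio $r=\bigl(\tfrac{\beta^2-\beta+1}{2\beta+1}\bigr)^{1/m}$; it then specifies a concrete decoding rule --- the algorithm is deemed to ``guess'' $(x,y)=(\alpha,q)$ the moment its free capacity drops below $\Delta_\alpha=c_{q+1}-c_q$ --- and verifies by a case analysis (no guess made; guessed $y$ too small; guessed $y$ too large; guessed $x$ too small) that any failure of the guess forces ratio at least $r$ on the true instance. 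Your sketch replaces all of this with the assertion that the $c_y$ ``can be calibrated'' and that the $p$ checkpoint profiles are ``pairwise distinct.'' That distinctness is precisely what has to be proven: as stated, your rule ``output the unique $z$ whose profile is consistent with $A$ being better than $C(k,H)$-competitive'' is not obviously well-defined, since a priori an algorithm could be near-optimal at several checkpoints simultaneously without its acceptance profile separating them; ruling this out is what the quantitative choice of the $c_y$'s and the case analysis accomplish.

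Two further concrete problems with the calibration you do propose. First, your endpoints $c_0=0$ and $c_{m+1}=1$ are not a cosmetic deviation: the optimal profit per bucket ranges over a multiplicative factor exactly $\beta$ as $c$ runs over $[0,1]$, so geometric spacing on the full range would nominally give per-step ratio about $\beta^{1/(m+1)}$, a \emph{stronger} constant than the theorem's $\bigl(\tfrac{\beta^2-\beta+1}{2\beta+1}\bigr)^{1/(m+1)}$ --- a signal that the forcing argument cannot survive near the extremes of $c$; the clipping of the range to lie between $1/\beta$ and $1-1/\beta$ is exactly where the constant in $g_m(\beta)$ comes from, and it cannot be recovered without the explicit construction. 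Second, a minor slip: for this maximization problem the hypothesis gives $A(\sigma_{x,y})\geq \mathrm{OPT}(\sigma_{x,y})/C(k,H)$, not $A(\sigma_{x,y})\geq C(k,H)\cdot\mathrm{OPT}(\sigma_{x,y})$; your decoding test is stated with the inequality in the wrong direction, and the decoding step depends on it.
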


\vspace{-1.5cm}
\paragraph{Comparison of the bounds}
Let $\tau=H/k$.  Since $\frac{\beta^{m}-1}{\beta^{m-1}-1} \leq \beta$, using~\eqref{eq:amazing}, the upper bound of Theorem~\ref{thm:knapsack.upper} is at most
$(U/L)^q$, where $q\leq 1/2^{k(1-\tau)(1-{\cal H}(\frac{\tau}{1-\tau}))}$. Furthermore, since 
$\frac{\beta^2-\beta+1}{2\beta+1} \geq \frac{\beta}{3}$ (for all $\beta\geq 3$), the lower
bound of Theorem~\ref{thm:knapsack.lower} is at least $(U/L)^{q'} (1/3)^{q'}$, where 
$q' \geq 1/(2\sqrt{8k\tau(1-\tau)}2^{k(1-{\cal H}(\tau))}+1)$, for all $U/L \geq 3$. For simplicity, we omitted the floors and ceilings.

\section{Waiving the assumption of the tolerance parameter}
\label{sec:extensions}

In the imperfect advice setting we studied so far, the algorithm defines an application-specific tolerance parameter that measures its desired tolerance to errors (or equivalently, an anticipated upper bound on the error). This parameter is in a sense required, since the analysis of 
R\'enyi-Ulam games in~\cite{RivestMKWS80} involves the extreme value of error (i.e., $H$) instead of the instance-specific error value (i.e., $\eta$). Nevertheless, in this section, we discuss how to mitigate the need for pre-determining a tolerance parameter. We propose two different approaches, based on {\em resource-augmentation}, and {\em robustification}, which we discuss in what follows. We use the time-series search and online bidding problems as illustration, even though our approach may carry through in other online problems, at the expense of more complex calculations. 

\subsection{Resource augmentation}
\label{subsec:resource}

In this setting, we compare an {\em oblivious} online algorithm $A$ with $l$ advice bits and no information on the error bound, to an online algorithm $B$ that has $k$ {\em ideal} (i.e. error-free) advice bits. Specifically, we are interested in finding the smallest $l\geq k$ (as function of $k$) for which algorithm $A$ is at least as good as algorithm $B$, regardless of the error in the advice of $A$.

The following theorem shows that $O(1)$-factor resource augmentation suffices to obtain an oblivious algorithm that is at least as efficient as any algorithm that operates in the ideal setting of error-free advice, and even if a fraction $1/3-c$ of the advice bits may be erroneous, for any constant $c$. 

\begin{theorem}[Appendix]
Consider the time-series and the online bidding problems. For all sufficiently large $k$, and any $c\in (0,1/3)$, there is an oblivious online algorithm $A$ with advice of size $l$, whose competitive ratio is at 
least as good as that of {\em any} online algorithm $B$ with $k$ bits of perfect (i.e. error-free) advice, where 
$
l=\frac{1}{(\frac{2}{3}+c)(1-{\cal H}(\frac{\frac{1}{3}-c}{\frac{2}{3}+c}))} k+1,
$
for any error $\eta\leq (1/3-c)l$ in the advice of $A$.
\label{thm:resource.ts+bidding}
\end{theorem}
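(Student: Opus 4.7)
The plan is to treat the best online algorithm $B^{*}$ with $k$ error-free bits, in both problems, as a map from the $k$-bit advice to one of $2^k$ candidate strategies: for time-series search, the $2^k$ reservation prices on the geometric grid used in the proof of Theorem~\ref{thm:ts}; for online bidding, the $2^k$ bidding sequences of ${\cal X}_{b,2^k}$ from Definition~\ref{def:bunch.old}. Let $R^{*}$ denote the competitive ratio of $B^{*}$, so that $R^{*}$ bounds from above the competitive ratio of any $B$ with $k$ perfect bits. The oblivious algorithm $A$ will use its $l$ imperfect bits to recover exactly which of the $2^k$ candidates $B^{*}$ selects on the current input, and then play that candidate; consequently, $A$'s competitive ratio is at most $R^{*}$.

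To implement this identification, $A$ interprets its $l$ advice bits as the responses to $l$ comparison queries generated by the \CWeighting strategy of Lemma~\ref{lem:csearch}, applied to the \Range game of Theorem~\ref{th:range} with ${\cal D}=\{1,\ldots,2^k\}$ and tolerance $H=\lfloor(1/3-c)l\rfloor$. This recovers the correct index, regardless of the actual error $\eta\leq H$, provided
\[
2^k \;\leq\; \mu(l,H) \;=\; \bigl\lfloor 2^{l-H}/\mch{l-H}{H}\bigr\rfloor.
\]
Applying the upper estimate $\mch{l-H}{H}\leq 2^{(l-H){\cal H}(H/(l-H))}$ from~\eqref{eq:amazing}, it suffices that $(l-H)\bigl(1-{\cal H}(H/(l-H))\bigr)\geq k$. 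Plugging in $H=(1/3-c)l$ gives $l-H=(2/3+c)l$ and $H/(l-H)=(1/3-c)/(2/3+c)$, which is strictly below $1/2$ whenever $c>0$; the entropy term is therefore strictly below $1$, and the denominator in the stated expression for $l$ is positive. Solving for $l$ recovers the advertised lower bound, with the additive $+1$ absorbing the floor.

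Once the correct index is known, $A$ plays exactly $B^{*}$'s candidate, so $A$'s competitive ratio equals $R^{*}$: for time-series this is $(M/m)^{1/(2^k+1)}$, matching the tight value of~\cite{clemente2016advice}; for online bidding this is the $H=0$ specialization of Theorem~\ref{thm:noisy.upper}. The main obstacle is controlling the looseness of the entropy estimate, since~\eqref{eq:amazing} is tight only up to a $\sqrt{8H(1-H/(l-H))}$ multiplicative factor; this gap contributes only $O(\log k)$ extra advice bits and is absorbed by the ``sufficiently large $k$'' qualifier. As a sanity check, as $c\to 0^{+}$ the denominator vanishes and $l\to\infty$, consistent with the fact that an advice error rate of $1/3$ cannot be tolerated by any comparison-based decoder for these candidate families.
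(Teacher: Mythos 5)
Your argument is correct, but it takes a genuinely different route from the paper's. The paper's proof is ``run the imperfect-advice algorithms of Theorems~\ref{thm:ts} and~\ref{thm:noisy.upper} with $l$ bits and tolerance $H=(1/3-c)l$'': these algorithms search a \emph{finer} grid of roughly $2^{(l-H)(1-{\cal H}(H/(l-H)))}\geq 2^k$ candidates, and the comparisons of Sections~\ref{subsec:ts.comparisons} and~\ref{subsec:bidding.comparisons}, together with the known limits of $k$ error-free bits (the tight bound of~\cite{clemente2016advice} for time-series, and the $H=0$ case of Theorem~\ref{thm:noisy.lower} for bidding), show they already beat anything achievable with $k$ perfect bits. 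You instead use the $l$ noisy bits as an error-correcting encoding (via \Range/\CWeighting) of the \emph{index} of the candidate that an optimal $k$-perfect-bit algorithm $B^*$ would play, and then simulate that candidate; the quantitative core, namely the condition $2^{l-H}/\mch{l-H}{H}\geq 2^k$ combined with the upper estimate of~\eqref{eq:amazing}, is identical in both proofs and yields the same $l$. Your route is more black-box: it needs no lower bound on what $k$ perfect bits can achieve and works for any problem in which a $k$-bit algorithm is a selector among at most $2^k$ online strategies, whereas the paper's route yields an $A$ that can be strictly better than $B^*$ (finer grid) but is problem-specific. Three small points to tidy: (i) $R^*$ lower-bounds (not upper-bounds) the ratio of every $B$ with $k$ perfect bits --- your conclusion ``$A$'s ratio $\leq R^*\leq$ ratio of any $B$'' is what you actually use, and it requires $B^*$ to be (an) optimal such algorithm, which does exist for both problems; (ii) for bidding, the optimal perfect-advice ratio is the best-of-${\cal X}_{b,2^k}$ value $\min_b b^{2^k+1}/(b^{2^k}-1)$ matching Theorem~\ref{thm:noisy.lower} at $H=0$, not literally the $H=0$ formula of Theorem~\ref{thm:noisy.upper} (whose $U=\lceil\mch{k}{0}\rceil=1$ introduces a rounding artifact) --- harmless, since your argument only needs to simulate whichever $B^*$ is optimal; (iii) the remark about the $\sqrt{8m(1-m/N)}$ slack is unnecessary, as you only invoke the upper estimate $\mch{N}{m}\leq 2^{N{\cal H}(m/N)}$, which is where the paper's condition $l(\frac{2}{3}+c)(1-{\cal H}(\frac{1/3-c}{2/3+c}))>k$ comes from as well.
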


\subsection{Robustification}
\label{subsec:robustification}

In this setting, we augment the imperfect advice framework by requiring not only that the algorithm minimizes the competitive ratio assuming that the advice error is at most the tolerance $H$, but also that its competitive ratio does not exceed a {\em robustness} requirement $r$, for some specified $r$, if the error exceeds $H$ (and in particular, if the advice is adversarially generated). We call such online algorithms {\em $r$-robust}. Thus, this model can be seen as an extension of both the imperfect advice and the untrusted advice model of~\cite{DBLP:conf/innovations/0001DJKR20}.

For the time-series problem, we obtain the following result, which generalizes Theorem~\ref{thm:ts}. In particular, note  that Theorem~\ref{thm:ts} is a special case of Theorem~\ref{thm:ts.robust}
for $\rho=1$.

\begin{theorem}[Appendix]
Consider the online time series search problem, with imperfect advice of size $k$, tolerance $H\leq k/2$, and robustness $r=(M/m)^\rho$, where $\rho \in (1/2,1]$. There is an $r$-robust algorithm that uses $k$ comparison queries, and has competitive ratio at most 
$(M/m)^\frac{2\rho-1}{U+1}$, where $U={\lfloor 2^{k-H}/\multiset{k-H}{H}}\rfloor$, for any $H\leq k/2$. Moreover, every (deterministic) algorithm based on $k$ subset queries has competitive ratio
better than $(M/m)^\frac{2\rho-1}{L+1}$, where $L={\lceil 2^k/\multiset{k-H}{H}}\rceil$.
\label{thm:ts.robust}
\end{theorem}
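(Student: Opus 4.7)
My plan is to lift Theorem~\ref{thm:ts} to the robustification setting by identifying the reservation prices that are compatible with the $r$-robustness requirement, and then running the original geometric-candidate strategy inside the resulting shrunken interval. A reservation-price algorithm with threshold $p$ has worst-case ratio $\max(M/p,\,p/m)$ across all sequences, so requiring this to be at most $r=(M/m)^{\rho}$ is equivalent to asking that $p\in[\tilde m,\tilde M]$, where $\tilde m = M^{1-\rho}m^{\rho}$ and $\tilde M = M^{\rho}m^{1-\rho}$. This interval is non-degenerate precisely for $\rho\geq 1/2$ and satisfies $\tilde M/\tilde m = (M/m)^{2\rho-1}$, which is exactly the ratio that appears (raised to $1/(U{+}1)$ or $1/(L{+}1)$) in the claimed bounds.

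For the upper bound I would pick $U$ candidate thresholds $a_1,\ldots,a_U$ geometrically spaced inside $[\tilde m,\tilde M]$ with common ratio $(M/m)^{(2\rho-1)/(U+1)}$. Because $U\leq \lfloor 2^{k-H}/\multiset{k-H}{H}\rfloor$, Theorem~\ref{th:range} supplies a $k$-comparison-query, $H$-error-tolerant protocol that identifies an index $i^\ast$ pointing to the right reservation. Using $a_{i^\ast}$ as the threshold keeps the algorithm inside the robustness interval and, by the geometric spacing, gives competitive ratio at most $(M/m)^{(2\rho-1)/(U+1)}$ in the tolerance regime. Taking $\rho=1$ collapses $[\tilde m,\tilde M]$ to $[m,M]$ and recovers Theorem~\ref{thm:ts}.

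For the lower bound I would mirror the reduction from Theorem~\ref{thm:ts} but constrained to $[\tilde m,\tilde M]$. Place $L$ thresholds $a_1,\ldots,a_L$ geometrically in $[\tilde m,\tilde M]$ with common ratio $C'=(M/m)^{(2\rho-1)/(L+1)}$, set $a_0=\tilde m$ and $a_{L+1}=\tilde M$, and consider the request sequences $\sigma_i=(\tilde m,a_1,\ldots,a_i,\tilde m)$ for $i\in[1,L+1]$. An $r$-robust algorithm's acceptances are effectively confined to $[\tilde m,\tilde M]$; to be strictly better than $C'$-competitive on $\sigma_i$, it is therefore forced to accept $a_i$ itself, since accepting an earlier $a_j$ or defaulting to the trailing $\tilde m=a_0$ yields ratio at least $a_i/a_{i-1}=C'$. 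Hence a hypothetical sub-$C'$ $r$-robust algorithm would distinguish $L+1$ sequences via its $k$ subset queries with up to $H$ errors, contradicting Theorem~\ref{th:range}.

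The most delicate step I anticipate is the upper-bound analysis for sequences whose maximum falls outside $[\tilde m,\tilde M]$, where the candidate set provides no target. In those regimes the algorithm commits to a boundary reservation ($\tilde m$ on the low side or $\tilde M$ on the high side), and the resulting contribution to the ratio is at most $\tilde m/m = M/\tilde M = (M/m)^{1-\rho}$. Closing the argument cleanly requires either including $a_0=\tilde m$ and $a_{U+1}=\tilde M$ as implicit candidates so that the $U+1$ geometric sub-intervals tile $[\tilde m,\tilde M]$ exactly and the extremal case is absorbed into the same $(M/m)^{(2\rho-1)/(U+1)}$ spacing, or a short case analysis showing that the boundary contribution is dominated by the robustness bound $r$ and so does not worsen the stated consistency.
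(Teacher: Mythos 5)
Your proposal follows the paper's proof essentially verbatim: the paper defines $p_1,p_2$ by $M/p_1=(M/m)^\rho$ and $p_2/m=(M/m)^\rho$ (your $\tilde m,\tilde M$), observes that $r$-robustness forces the reservation price into $[p_1,p_2]$ with $p_2/p_1=(M/m)^{2\rho-1}$, and then reruns the argument of Theorem~\ref{thm:ts} inside this interval (geometric candidate thresholds located via Theorem~\ref{th:range} for the upper bound, and the same prefix-sequence reduction for the lower bound). The ``delicate step'' you flag --- sequences whose maximum falls outside $[p_1,p_2]$, on which any $r$-robust algorithm is forced to a ratio of about $(M/m)^{1-\rho}$ --- is not treated in the paper's own sketch either; note that this case is absorbed only by the robustness guarantee $r$ and not by the stated consistency bound (your fix of adding boundary candidates does not help on the low side), so your caution there is warranted rather than a defect relative to the paper.
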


The analysis of $r$-robust algorithms for online bidding is more challenging, in particular in what concerns the impossibility results. We give an overview of the approach. For the upper bound, we can follow an analysis along the lines of Theorem~\ref{thm:noisy.upper}, however, each bidding sequence in the collection ${\cal X}_{b,2^k}$ must be individually $r$-robust. This is easy to enforce, and it requires that $b$ much be such that $b^2/(b-1)\leq r$. The lower bound is more subtle: the proof follows the lines of Theorem~\ref{thm:noisy.lower}, but uses the fact that if all the $l$ sequences in $X_0, \ldots ,X_{l-1}$ must be $r$-robust, then $\alpha_{\bar X}^2/(\alpha_{\bar X}-1) \leq r$. 
We obtain the following:

\begin{theorem}[Appendix]
For every $r\geq 4$ there is an $r$-robust bidding strategy with $k$-bit imperfect advice that has competitive ratio at most 
\[
\min_{b>1} \frac{b^{2^k+U+1}}{b^{2^k}-1}, \quad \textrm{ subject to } b^{2^{k+1}}/(b^{2^k}-1) \leq r, \quad \textrm{ and where } U=\lceil 2^H \mch{k-H}{H} \rceil.
\]
Furthermore, every $r$-robust bidding strategy with $k$-bit imperfect advice has competitive ratio at least
\[
\min_{\alpha>1} \frac{\alpha^{2^k+L+1}}{\alpha^{2^k}-1} \quad \textrm{ subject to } \alpha^{2k}/(\alpha^k-1) \leq r, \quad \textrm{ and where } L=\lfloor \mch{k}{H} \rfloor. 
\]
\label{thm:bidding.robust}
\end{theorem}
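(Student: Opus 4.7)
My plan is to adapt Theorems~\ref{thm:noisy.upper} and~\ref{thm:noisy.lower} by imposing $r$-robustness as a structural constraint on the collection of candidate bidding sequences. The overall architecture of both arguments is preserved: on the upper side one commits to one of at most $2^k$ geometrically-spaced candidates on the basis of the advice; on the lower side one matches against a fault-tolerant parallel strategy. What changes is the admissible range of the single parameter governing the geometric growth of the construction.

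For the upper bound, I would re-use the family ${\cal X}_{b,2^k}$ from Definition~\ref{def:bunch.old}. Each $X_i=(b^{i+j\,2^k})_{j\ge 0}$ is a single geometric bidding strategy whose consecutive bids grow by factor $b^{2^k}$, so a standard computation gives $\acc(X_i)=b^{2\cdot 2^k}/(b^{2^k}-1)$. Because the algorithm's only handle on the advice error is to commit to some $X_i$, and adversarial advice may force an arbitrary choice, $r$-robustness requires that every $X_i$ individually satisfy $\acc(X_i)\le r$; this is exactly the constraint $b^{2^{k+1}}/(b^{2^k}-1)\le r$. The rest of the argument is identical to Theorem~\ref{thm:noisy.upper}: run the \mincycl protocol of Theorem~\ref{thm:cyclic.upper} on the cyclic ordering of candidates induced by the hidden target to select a candidate of cyclic rank at most $U=\lceil 2^H\mch{k-H}{H}\rceil$, observe that its cost is within factor $b^{U+1}$ of the best, and obtain the bound $b^{2^k+U+1}/(b^{2^k}-1)$. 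Minimizing $b$ subject to the robustness constraint yields the stated upper bound.

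For the lower bound, I would mirror the scheme of Theorem~\ref{thm:noisy.lower}. Any imperfect-advice algorithm is characterized by a collection ${\cal X}=\{X_0,\ldots,X_{l-1}\}$ of candidate sequences with $l\le 2^k$. The same adversarial-advice argument forces each $X_i$ individually to be $r$-robust. Writing $\bar X$ for the interleaved sorted sequence of all bids, each $X_i$ has per-step growth $\alpha_{\bar X}^l$, so individual robustness amounts to $\alpha_{\bar X}^{2l}/(\alpha_{\bar X}^l-1)\le r$. Theorem~\ref{thm:search.lower} forces the chosen candidate's rank to be at least $\phi_l=\lfloor l\mch{k}{H}/2^k\rfloor$, and Theorem~\ref{thm:multi-search.alpha.faulty} then delivers the lower bound $\alpha_{\bar X}^{l+1+\phi_l}/(\alpha_{\bar X}^l-1)$. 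Performing the two-case analysis in $l$ as in Theorem~\ref{thm:noisy.lower}, and noting that decreasing $l$ only tightens the constraint on $\alpha_{\bar X}$, the worst case for the algorithm remains $l=2^k$, giving the claimed constrained minimization with $L=\lfloor\mch{k}{H}\rfloor=\phi_{2^k}$.

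The main obstacle, as in the non-robust setting, is the lower bound, with two new subtleties. First, the reduction to $l=2^k$ must be re-verified under the constraint: since both the objective $\alpha^{l+1+\phi_l}/(\alpha^l-1)$ and the feasible region $\{\alpha:\alpha^{2l}/(\alpha^l-1)\le r\}$ depend on $l$, one must argue that shrinking $l$ cannot yield a smaller constrained minimum. The key observation is that the constraint tightens as $l$ decreases, which combined with the decreasing-in-$l$ behavior already established in Theorem~\ref{thm:noisy.lower} forces the worst case to $l=2^k$. Second, one should verify feasibility: the condition $r\ge 4$ is precisely what is needed for the feasible set to be nonempty, since $\alpha^{2l}/(\alpha^l-1)$ attains its minimum value $4$ at $\alpha^l=2$.
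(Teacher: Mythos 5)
Your upper bound follows the paper's proof essentially verbatim: reuse ${\cal X}_{b,2^k}$, note that each candidate is geometric with base $b^{2^k}$ and hence individually $r$-robust exactly when $b^{2^{k+1}}/(b^{2^k}-1)\leq r$, and then run the machinery of Theorem~\ref{thm:noisy.upper} (selection via \mincycl within cyclic rank $U$) and optimize $b$ under this constraint. That part is fine and matches the intended argument.

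The lower bound, however, has a genuine gap at the step where you convert individual $r$-robustness of the candidates into a constraint on $\alpha_{\bar X}$. You assert that ``each $X_i$ has per-step growth $\alpha_{\bar X}^l$, so individual robustness amounts to $\alpha_{\bar X}^{2l}/(\alpha_{\bar X}^l-1)\leq r$.'' This presumes that the $l$ candidate sequences form a regular, evenly interleaved geometric family (as in ${\cal X}_{b,l}$), but in a lower bound the collection $\{X_0,\ldots,X_{l-1}\}$ is arbitrary: the sequences need not be geometric, and the quantity $\alpha_{\bar X}=\limsup_i \bar{x}_i^{1/i}$ of the merged sorted sequence says nothing, by itself, about the growth rate of each individual subsequence, so the claimed per-sequence ratio $\alpha_{\bar X}^{2l}/(\alpha_{\bar X}^l-1)$ does not follow. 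Translating ``every candidate is $r$-robust'' into a usable constraint on $\alpha_{\bar X}$ is precisely the nontrivial ingredient here, and the paper does not derive it from a growth-rate identity; it invokes a previously established property (from the cited work on contract scheduling with untrusted advice) that individual $r$-robustness of all sequences in the collection forces a bound of the form $\alpha_{\bar X}^2/(\alpha_{\bar X}-1)\leq r$ on the merged sequence, and only then reruns the argument of Theorem~\ref{thm:noisy.lower}, observing that the bound is weakest at $l=2^k$. Relatedly, even granting your assertion, the constraint you obtain at $l=2^k$, namely $\alpha^{2\cdot 2^k}/(\alpha^{2^k}-1)\leq r$, is not the constraint appearing in the statement you are proving, and your claim that ``decreasing $l$ only tightens the constraint'' is unverified (under your constraint the feasible range of $\alpha$ shifts rather than simply shrinks as $l$ decreases), so the reduction of the case analysis to $l=2^k$ is not established. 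To repair the argument you need either to prove the robustness-to-$\alpha_{\bar X}$ implication for arbitrary collections or to import it, as the paper does, from the cited result.
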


\bibliographystyle{plain}
\bibliography{fault-contract,targets-arxiv,targets,refs}

\appendix

\section{Appendix}

\subsection{Omitted material of Section~\ref{sec:core}}

\begin{proof}[Proof of Theorem~\ref{thm:cyclic.upper}]
We will reduce \mincycl to the following game that was studied in~\cite{RivestMKWS80}:
\bigskip

\noindent \textbf{\Identify game:}
In this game, $x$ is an integer in  $\{0,1,\ldots, m-1\}$ for some known $m$, and the objective is to identify $x$ with as few queries as possible, if up to $H$ queries may be answered incorrectly. 

We will use the following result in the analysis:
\begin{lemma}~\cite{RivestMKWS80} \label{lem:identifyGame}
The number $Q(m,H)$ of queries required to identify $x$ in an instance of \Identify is such that 
\[
\min\{k'|2^{k'}\geq m\cdot \multiset{k'}{H} \} \leq Q(m,H) \leq \min \{k| 2^{k-H} \geq m \multiset{k-H}{H}\}.
\]
\end{lemma}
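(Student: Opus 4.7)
The plan is to derive both bounds in a unified way by reducing \Identify to \cSearch and back, then invoking the two halves of Lemma~\ref{lem:csearch}. The approach exploits the close kinship between identifying an integer value in a discrete set of size $m$ and localizing a real value in a continuous interval of length $1/m$.

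For the lower bound, suppose there exists a questioner's strategy $S$ for \Identify$(m,H)$ using $k'$ queries. I would convert $S$ into a questioner's strategy for \cSearch using the same number of queries as follows: partition $(0,1]$ into $m$ subintervals of equal length $1/m$, associate the $i$-th subinterval with candidate $i\in\{0,\ldots,m-1\}$, and treat the unknown real value as belonging to whichever subinterval corresponds to its discrete ``candidate''. Each subset query on the \Identify instance, ``is $x \in T$?'', translates to a subset query on the \cSearch instance, ``does the real value lie in $\bigcup_{i\in T} [i/m,(i+1)/m]$?''. Running $S$ to identify the correct index thus pinpoints the subinterval containing the hidden real value, producing an interval of length $1/m$. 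The lower-bound part of Lemma~\ref{lem:csearch} forces $1/m \geq \multiset{k'}{H}/2^{k'}$, which rearranges to $2^{k'} \geq m\,\multiset{k'}{H}$. Taking the smallest such $k'$ yields the claimed lower bound on $Q(m,H)$.

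For the upper bound, I would solve \Identify$(m,H)$ by running \CWeighting on a suitable embedding, provided $2^{k-H}\geq m\,\multiset{k-H}{H}$. Place the $m$ candidates at the evenly spaced points $y_i = (2i+1)/(2m) \in (0,1]$, so that consecutive $y_i$ are at mutual distance $1/m$. Simulate \CWeighting with the hidden real value set to $y_x$; each of its comparison queries, ``is $y_x\leq a$?'', translates to the subset query ``is $x \in \{i : y_i \leq a\}$?'' on $\{0,\ldots,m-1\}$. By the upper-bound part of Lemma~\ref{lem:csearch}, \CWeighting outputs an interval of length at most $\multiset{k-H}{H}/2^{k-H}$; under the hypothesis on $k$ this is at most $1/m$, the spacing of the $y_i$. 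Hence the output interval contains at most one candidate, which identifies $x$.

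The main obstacle is the borderline case in the upper bound: when equality $2^{k-H} = m\,\multiset{k-H}{H}$ holds, the output interval of \CWeighting has length exactly $1/m$ and could straddle two consecutive candidate points. I would handle this either by a slight, generic perturbation of the embedding ($y_i = (2i+1)/(2m) + \delta_i$ for tiny generic $\delta_i$, so that no $y_i$ coincides with any comparison threshold used by \CWeighting), or by inspecting \CWeighting to argue that its output interval has endpoints at past query thresholds rather than at candidate points. Either fix removes the tie and preserves the single-candidate conclusion.
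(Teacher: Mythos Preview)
The paper does not give its own proof of this lemma; it is quoted verbatim from \cite{RivestMKWS80} and used as a black box inside the proof of Theorem~\ref{thm:cyclic.upper}. So there is no ``paper proof'' to compare against.

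Your derivation is correct and is in fact the natural way to recover the statement from Lemma~\ref{lem:csearch}: the discrete identification problem on $m$ candidates and the continuous localization problem to accuracy $1/m$ are interreducible via the embedding you describe, and each direction of Lemma~\ref{lem:csearch} then yields the corresponding inequality on $Q(m,H)$. Your treatment of the borderline case in the upper bound is also fine; the cleanest phrasing is that an interval of length at most $1/m$ whose endpoints are not candidate points (which you can enforce since \CWeighting's thresholds are chosen by the questioner) can contain at most one of the equally spaced $y_i$.
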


Given an instance of \mincycl, we create an instance of \Identify, with 
$m = 2^{k-H}/\mch{k-H}{H}$ (note that $H$ is the same for both instances). Partition the interval $[0, \ldots, n-1]$ into $m$ disjoint subintervals, each of length at most $\lceil n/m \rceil$. Let $x$ being the index in $[0, \ldots ,n-1]$ for which $A[x]=0$, and let $I_x$ denote the interval that contains $x$. 
A query $q$ of the \wa strategy of~\cite{RivestMKWS80} that asks ``is $I_x \leq b$ for some $b \in \{0,m-1\}$?" 
translates to query $\mu(x)$ in the \mincycl instance that asks ``is $x \leq f(b)$?", where $f(b)$ is the largest value in the interval $I_b$. Note that the answer to $q$ is `yes' if and only if the answer to $\mu(q)$ is `yes'. The response to $\mu(q)$ is then given to \wa, which updates its state and proceeds with the next query. For $m$ defined as above, we have that 
$2^{k-H} \geq m \cdot \mch{k-H}{H}$ and using \wa, we can find $I_x$ using $k$ queries.  
Subsequently, we return the largest integer $f(x)$ in $I_x$. 
Given that $x$ is in $I_x$ and the length of the intervals is at least  $\lceil n/m \rceil$, we conclude that the returned index $j$ is such that $A[j]=\leq \lceil n/m \rceil = \lceil n \mch{k-H}{H}/2^{k-H} \rceil $.
\end{proof}

\begin{proof}[Proof of Theorem~\ref{thm:search.lower}]
Consider the following game that is defined as the \cSearch game, with the only difference that the goal is to guess a value as close to some 
$r \in [0,n)$, for some fixed $n$ (for the purpose of the proof, we can think of $n$ as sufficiently large). After receiving the responses to the $k$ queries, the questioner returns a number $r' \in [0,n)$, and the objective is to minimize $|r'-r|$. We will show a reduction from this game that will help us establish the lower bound. 
Suppose, by way of contradiction, that there exists a strategy, say ALG for \search that returns an element $e$ with $\pi(e) < \lfloor n\mch{k}{H}/2^{k} \rfloor - 1$. We devise a strategy for the questioner in the continuous game based on ALG. Given values of $r$, $k$ and $H$ that define an instance $G$ of the continuous game over the continuous interval $I = [0,n)$, create an instance $S$ of \search on a space $A$ of $n$ elements, with the same values of $k$ and $H$. Consider a bijective mapping $\beta$ that maps an element of rank $i$ in $A$ ($i\in \{0, \ldots, n-1\}$) to an interval $\beta(i) = [i,i+1)$ in $I$. 
Similarly, define a bijective mapping $\mu$ between queries asked for $G$ and those asked for $S$. Any range $[i,j]$ of indices that is a part of a subset query $q$ asked for $S$ is mapped to an interval $[i,j+1)$ in the query $\mu(q)$ asked for $G$. 
Let $r$ denote the searched value in $G$ and let $x$ denote an index of $A$ such that $r$ belongs to $\beta(x)$.
To search for $r$, we consider queries that ALG asks for $S$ and for any such query $q$, we ask $\mu(q)$ for $G$. The response to $\mu(q)$ is then given to ALG so that it can update its state and ask its next query.

Recall that we supposed that  ALG outputs an element $e$ such that $\pi(e) < \lfloor n\mch{k}{H}/2^{k} \rfloor -1$, and that $\beta(\pi(e)) = [\pi(e),\pi(e)+1)$. As an output for $G$, we return $r' = \pi(e)+1$ as the answer for $G$. Note that there are exactly $e+1$ intervals from the range of $\beta$ that lie between $r$ and $r'$ in $I$. That is $|r-r'| < \lfloor n\mch{k}{H}/2^{k} -1\rfloor+1 \leq n(\mch{k}{H}/2^{k}$. This, however, contradicts Lemma~\ref{lem:csearch}. 
\end{proof}

\subsection{Omitted material of Section~\ref{sec.bidding}}

\begin{proof}[Proof of Theorem~\ref{thm:noisy.upper}]
We apply the algorithm of Theorem~\ref{thm:cyclic.upper} on the set of indices of all 
sequences in ${\cal X}_{b,2^k}$ with $n=2^k$, where $b>1$ will be chosen later. The output is the index of a strategy in ${\cal X}_{b,2^k}$ which is ranked at most $U$ among the sequences in ${\cal X}_{b,2^k}$ 
From the definition of ${\cal X}_{b,2^k}$, and in particular its cyclic property, this means that the selected strategy discovers the target with cost at most $b^U$ times larger than the best strategy in ${\cal X}_{b,2^k}$. We infer that the competitive ratio of the chosen strategy is at most $\frac{b^{2^k+1+U}}{b^{2^k}-1}$.  This expression is minimized for $b=(\frac{2^k+U+1}{U+1})^{1/2^k}$, from which we obtain that the competitive ratio of our strategy is at most
\[
\frac{1+U}{2^k}\left(1+\frac{2^k}{1+U}\right) ^{1+\frac{1+U}{2^k}}.
\]
\end{proof}

\begin{proof}[Proof of Theorem~\ref{thm:multi-search.alpha.faulty}]
Consider a $p$-parallel strategy $X$, defined by $p$ bidding strategies $X_0, \ldots ,X_{p-1}$, each run on a dedicated processor. Let $x_{j,i}$ denote bid $i$ in $X_j$; we say that $x_{j,i}$ {\em precedes} bid $x_{j,i'}$ in $j$ if $i<i'$. We define the {\em prefix cost} of a bid in $X_j$ as the sum of the values of all bids that precede that bid in $X_j$. For $j \in [0,\ldots ,p-1]$, we denote by $u_X(c,j)$ as the value of the largest bid in $X_j$, such that the sum of the prefix cost of that bid and the value of that bid do not exceed $c$. We also define by $u_{X,\phi}(c)$ at the $(\phi+1)$-largest quantity in the set $\{u_X(c,j)\}_{j=0}^{p-1}$. 

 Consider an arbitrary indexing of all bids in $X$, i.e., the $i$-th bid is such that it is the $m$-th bid in $X_j$, for some $m,j$. We will represent this bid as a pair of the form $(C_i,D_i)$, where $C_i$ is the cost of all bids that precede bid $i$ in the sequence to which it belongs, and $D_i$ is the bid itself (i.e., its value). Note that this representation ignores the specific sequence to which the bid is assigned, since this is not important for the purposes of the proof, as we will see. Given a bid represented as $(C_i,D_i)$ we define $d_i$ to be equal to  $u_{X,\phi}(C_i+D_i)$: we call this value the $(\phi+1)$-largest bid relative to $D_i$, in $X$.

Recall that $\bar{X}$ denotes the sequence of all bid values in $X$, in non-decreasing order. 
Hence, each bid in $X$ is mapped via its length to an element of this sequence (breaking ties arbitrarily).

Fix a bid $i_0$ of the form $b_{i_0}=(C_{i_0},D_{i_0})$, and suppose, without loss of generality, that $b_{i_0}$ belongs to sequence $X_0$. Let $c=C_{i_0}+D_{i_0}$. For all $m \in [1, p-1]$, let 
$b_{i_m}=(C_{i_m},D_{i_m})$ denote the largest bid in $X_m$ for which the sum of its prefix cost and the value of its bid are at most $c$. For every $m \in [0,p-1]$, define $I_m$ as the set of indices in ${\mathbb N}$ such that $i \in I_m$ if and only if a bid of value $x_i$ is such that its prefix cost plus $x_i$ does not exceed $c$. From the definition of the competitive ratio we have that 
\[
\comp(X) \geq \frac{\sum_{i \in I_m} x_i}{d_{i_m}}, \qquad \ \textrm{ for all $m \in [0, p-1]$.}
\]
Therefore,
\[
\comp(X) \geq \max_{0 \leq m \leq p-1} \frac{\sum_{i \in I_m} x_i}{d_{i_m}},
\]
and using the property $\max\{ a/b, c/d \} \geq \frac{a+b}{c+d}$, for all $a,b,c,d>0$, we obtain that
\begin{equation}
\comp(X) \geq  \frac{\sum_{m=0}^{p-1}\sum_{i \in I_m} x_i}{\sum_{m=0}^{p-1} d_{i_m}}.
\label{eq:summand}
\end{equation}
Next, we will bound the numerator of the fraction in~\eqref{eq:summand} from below, and its denominator from above. We begin with a useful observation: we can assume, without loss of generality, that for cost $c$ (defined earlier), no bid of value $d_{i_0}$ or smaller has prefix cost larger than 
$c$ minus the value of the bid in question. This follows from the definition of $d_{i_0}$: if such a bid existed, then one could simply ``remove'' this bid from $X$, and obtain a $p$-parallel sequence of no worse competitive ratio (in other words, such a bid is useless, and one can derive a sequence of no larger competitive ratio than $X$ that does not contain it). 

Using the above observation, it follows that the numerator in~\eqref{eq:summand} includes, as summands, all bids of value at most $d_{i_0}$, as well as at least $\phi+1$ bids that are at least as large as $d_{i_0}$ ($\phi$ of those bids are from the definition of $d_{i_0}$, and the additional one is bid $b_{i_0}$). Let $q$ denote an index such that $d_{i_0}=\bar{x}_{q}$, then we have that 
\[
\sum_{m=0}^{p-1}\sum_{i \in I_m} x_i \geq \sum_{i=0}^{q+\phi+1} \bar{x}_i.
\]
We now show how to upper-bound the denominator, using the monotonicity implied in the definition of 
the $(\phi+1)$-largest value relative to a given bid value, and the definition of the bids 
$b_{i_0}, \ldots , b_{i_{p-1}}$. Specifically, for  every bid $b_{i_m}$, with $m \in [1, p-1]$, we have that $d_{i_m} \leq d_{i_0}$. It thus follows
that
\[
\sum_{m=0}^{p-1} d_{i_m} \leq \sum_{i=q}^{q-(p-1)} \bar{x}_{i}.
\]
Combining the two bounds, it follows that 
\[
\comp(X) \geq \sup_{0\leq q<\infty} \frac{\sum_{i=0}^{q+\phi+1} \bar{x}_i}{\sum_{i=q}^{q-(p-1)} \bar{x}_{i}}.
\]

In the last step of the proof, we will use a result from search theory, namely Gal's {\em functional} theorem, stated below:

\begin{theorem}[Gal~\cite{gal:general}]
Let $q$ be a positive integer, and $X=(x_i)_{i=0}^\infty$ a sequence of positive numbers with 
$\sup_{n \geq 0} x_{n+1}/x_n <\infty$ and $\alpha_X>0$. Suppose that $F_i$ is a sequence of functionals that satisfy the following properties:
\begin{itemize}
\item[(1)]$ F_i(X)$ depends only on $x_0,x_1, \ldots x_{i+q}$,
\item[(2)] $F_i(X)$ is continuous in every variable, for all positive sequences $X$,
\item[(3)] $F_i(a X)=F_i(X)$, for all $a>0$,
\item[(4)] $F_i(X+Y) \leq \max(F_i(X), F_i(Y))$, for all positive sequences $X,Y$, and
\item[(5)] $F_{i+j}(X) \geq F_i(X^{+j})$, for all $j \geq 1$, where $X^{+j}=(x_j, x_{j+1}, \ldots)$.  
\end{itemize}
Then 
\[
\sup_{0 \leq k <\infty} F_k(X) \geq \sup_{0 \leq k<\infty} F_k (G_{\alpha_X}), 
\]
where $G_a$ is defined as the geometric sequence $(a^i)_{i=0}^\infty$.
\label{thm:gal}
\end{theorem}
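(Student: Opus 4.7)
The plan is to establish the inequality $\sup_k F_k(X) \geq \sup_k F_k(G_{\alpha_X})$ by comparing $X$ against the geometric reference sequence $G_{\alpha_X}$, exploiting the five functional properties in a coordinated fashion. First I would reduce to a strict inequality in the limsup: by the continuity property~(2), each $F_k(G_{\alpha})$ is continuous in $\alpha$, so it suffices to prove $\sup_k F_k(X) \geq \sup_k F_k(G_\alpha)$ for every $\alpha < \alpha_X$ and then pass to the limit $\alpha \uparrow \alpha_X$. Fix such an $\alpha$. By the definition $\alpha_X=\limsup_i x_i^{1/i}$, there exists an infinite subsequence of indices $n_1<n_2<\cdots$ at which $x_{n_j}\geq \alpha^{n_j}$.

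The main constructive step is to realize $G_\alpha$ (or any finite initial segment of it, long enough to determine $F_k$) as a positive linear combination of shifted and scaled copies of $X$. Concretely, I would look for coefficients $c_j\geq 0$ and shifts $s_j$ such that, coordinatewise,
\[
\alpha^i \;\leq\; \sum_{j} c_j\, x_{i+s_j}\quad\text{for } 0\leq i\leq k+q,
\]
equivalently that a positive sum $Y=\sum_j c_j X^{+s_j}$ dominates $G_\alpha$ on the finite window that determines $F_k$. Once such a decomposition is available, property~(4) iterated (together with continuity and the observation that $F_k$ depends only on the first $k+q$ coordinates) yields $F_k(G_\alpha)\leq \max_j F_k(c_j X^{+s_j})$. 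Property~(3) strips the scalar: $F_k(c_j X^{+s_j})=F_k(X^{+s_j})$. Finally, the shift property~(5) gives $F_k(X^{+s_j})\leq F_{k+s_j}(X)\leq \sup_i F_i(X)$, which completes the chain $F_k(G_\alpha)\leq \sup_i F_i(X)$. Taking the supremum over $k$ yields the theorem (after sending $\alpha\uparrow\alpha_X$).

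The main obstacle is constructing the domination $Y\geq G_\alpha$ on a prefix, because the limsup condition only supplies sparse indices $n_j$ at which $x_{n_j}\gtrsim \alpha^{n_j}$, rather than consecutive geometric growth. This is exactly where the boundedness hypothesis $M:=\sup_n x_{n+1}/x_n<\infty$ enters: it rules out pathological behavior between the good indices $n_j$ and allows one to choose a single $n_j$ large enough that, after scaling by an appropriate $c_j$, the shift $X^{+n_j-k-q}$ bounds the finite window $(1,\alpha,\dots,\alpha^{k+q})$ of $G_\alpha$ from above entry by entry. If one such shift does not suffice, one augments it by additional scaled shifts $X^{+n_{j'}}$ with $n_{j'}$ chosen from the good subsequence, using the sub-max property~(4) to guarantee that the resulting $F_k$ value is controlled by $\sup_i F_i(X)$. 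The bookkeeping that makes this decomposition work with only finitely many terms (so that property~(4) can be iterated) and compatible with the continuity of $F_k$ is the technical heart of the argument.
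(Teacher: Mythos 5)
This statement is Gal's functional theorem, which the paper quotes from~\cite{gal:general} and does not prove; so your proposal can only be judged against the known proof of that result, not against anything in this paper.

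As written, your argument has a genuine gap at its central reduction step. You construct a positive combination $Y=\sum_j c_j X^{+s_j}$ that \emph{dominates} the window $(1,\alpha,\dots,\alpha^{k+q})$ coordinatewise and then invoke property~(4) to conclude $F_k(G_\alpha)\leq \max_j F_k(c_j X^{+s_j})$. Property~(4) only applies when the window of $G_\alpha$ \emph{equals} the sum (or is a limit of such sums, which continuity~(2) can then handle); it gives nothing for a sequence that merely dominates $G_\alpha$, because no monotonicity of the $F_i$ in the coordinates is assumed --- and none holds in the intended application, where $F_q(\bar X)=\bigl(\sum_{i\leq q+\phi+1}\bar x_i\bigr)/\bigl(\sum_{i=q-(p-1)}^{q}\bar x_i\bigr)$ decreases when denominator coordinates grow. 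So the entire strategy of ``bounding the geometric window from above entry by entry,'' which is what you use the hypothesis $\sup_n x_{n+1}/x_n<\infty$ and the good indices $n_j$ for, is aimed at the wrong target. The actual difficulty in Gal's theorem is to show that the geometric window with ratio $\alpha_X=\limsup_i x_i^{1/i}$ lies in the \emph{closure of the positive cone} generated by the shifted windows $(x_s,\dots,x_{s+k+q})$, i.e., to produce exact (limits of) representations $\sum_j c_j X^{+s_j}\to G_{\alpha_X}$ on the window; only then do (3), (4), (2) and the shift inequality (5) combine as in your final chain. Your sketch defers exactly this construction (``the technical heart'') and, because it is phrased in terms of domination rather than exact representation, even a completed version of your bookkeeping would not yield the theorem. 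The steps you do carry out correctly --- the limit $\alpha\uparrow\alpha_X$ via continuity, $F_k(cX^{+s})=F_k(X^{+s})$ by (3), and $F_k(X^{+s})\leq F_{k+s}(X)\leq\sup_i F_i(X)$ by (5) --- are the easy part of the argument.
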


Define now the functional $F_{q}(\bar{X})=\frac{\sum_{i=0}^{q+\phi+1} \bar{x}_i}{\sum_{i=q}^{q-(p-1)} \bar{x}_{i}}$, for every $q$. The functional satisfies the conditions (1)-(5) of 
Theorem~\ref{thm:gal} (see Example 7.3 in~\cite{searchgames}). By applying Gal's Theorem, it follows that 
\[
 \comp(X) \geq 
 \sup_{0 \leq q<\infty} 
 \frac {    \sum_{i=0}^{q+\phi+1}  \alpha_{\bar{X}}^i   } 
{\sum_{i=q}^{q-(p-1)} \alpha_{\bar{X}}^i }.
\]
If $\alpha_{\bar{X}}\leq 1$, then it is easy to show that the above expression shows that $\comp(X)=\infty$; see, e.g.~\cite{aaai06:contracts}. Otherwise, i.e., if $\alpha_{\bar{X}}>1$, after some simple calculations 
we arrive at the desired result.
\end{proof}

\begin{proof}[Comparison between the upper and the lower bounds]

We compare the upper and lower bounds of Sections~\ref{subsec:bidding.upper} and~\ref{subsec:bidding.lower}. Define $f$ as the function
$f(x)=\frac{1}{x}(1+x)^{1+1/x}$, and note that $f$ is decreasing in $x$, with $f(1)=4$, and $\lim_{x \to \infty} f(x)=1$. Then, the upper bound of Theorem~\ref{thm:noisy.upper} is equal to $f(2^k/(U+1))$, whereas the lower bound of Theorem~\ref{thm:noisy.lower} is equal to $f(L)$, where 
$U,L$ are defined in the statements of the corresponding theorems. 

For every $y>x$ we have
\[
\frac{f(y)}{f(x)} =\frac{\frac{1}{y}(1+y)^{1+1/y}}{\frac{1}{x}(1+x)^{1+1/x}} \leq \frac{(1+y)^{1/y}}{(1+x)^{1/x}}.
\]
Moreover, using some more elementary calculus, 
\[
\log \frac{f(y)}{f(x)} \leq \log \frac{(1+y)^{1/y}}{(1+x)^{1/x}} \leq \log  \frac{y^{1/y}}{x^{1/x}}=\frac{1}{y}\log y-\frac{1}{x}\log x .
\]

We will use the above inequality to compare $f(2^k/(U+1))$ to $f(L)$. To simplify the calculations, we will assume that the upper bound is $f(2^k/U)$, since the additive ``one'' in the numerator has virtually no effect on the competitive ratio as $k$ becomes large. For the same reasons, we ignore the ceiling in the expression of $U$. Using the approximation of the partial sum of binomial coefficients of~\eqref{eq:amazing}, and defining $\tau=k/H$, we obtain that the ratio UB/LB of the upper and lower bounds satisfies
\[
\log \frac{\text{UB}}{\text{LB}} \leq \frac{\sqrt{8k\tau(1-\tau)} k(1-\tau)(1-{\cal H}(\frac{\tau}{1-\tau}))}
{2^{k(1-\tau)(1-{\cal H}(\frac{\tau}{1-\tau}))}} -
\frac{k(1-{\cal H}(\tau))}{2^{k(1-{\cal H}(\tau))}},
\]
\end{proof}

\subsection{Omitted material of Section~\ref{sec:fractional.knapsack}}

\begin{proof}[Proof of Theorem~\ref{thm:knapsack.upper}]
	Note that a fraction 1 of heavy items is accepted by both the online algorithm and \Opt. Therefore, the contribution of heavy items to the profits of the algorithm and \Opt are the same, say $\Delta$; we have $\delta \geq (1-c^*) d_x \geq (1-c_y) d_{x} $; this is because $c^*\in [c_{y-1},c_{y})$ and all heavy items have density larger than $d_x$. 
	The algorithm fills the reserved space of size $c$ with critical items, which are of density at least $d_{x-1}$, while \Opt fills a space of $c^* < c_y$ with critical items, which are of density at most $d_x$. 
	Thus, we have
	\begin{align*}
		\comp(A) \leq \frac{\Delta + c_y \cdot d_{x}}{\Delta + c_{y-1} \cdot d_{x-1}} & 
		\leq \frac{ (1-c_y) d_{x} + c_y \cdot d_{x}}{ (1-c_y) d_{x} + c_{y-1} \cdot d_{x-1}} & \\
		& = \frac{ 1}{1- (c_y - c_{y-1}/\beta)} = \frac{1}{1-c_1} = \frac{\beta^{m}-1}{\beta^{m-1}-1}. &\\   
	\end{align*}
\end{proof}

\begin{proof}[Proof of Theorem~\ref{thm:knapsack.lower}] \ \
	By way of contradiction, suppose there is an algorithm A with a better competitive ratio
  than $C(k,H)$. We will show that A could then be used in the \Range game with $k$ queries so as to identify an unknown value in $\{1,\ldots, p\}$, 
	which contradicts the upper bound of Theorem~\ref{th:range}.
	
	Fix the values of $(s,m')$ that minimize $g_m(\beta)$ subject to $s.m'\leq p$, and let $m = m'+1$. Define $p = s\cdot m$; we have $p > \lceil 2^k/\multiset{k}{H} \rceil$, otherwise, the pair $(s,m)$ results in a smaller value for $g_m(\beta)$ (note that $g_m(\beta)$ is a decreasing function of $m$).
	Suppose we want to identify an unknown value $z\in \{1,\ldots,s\cdot m\}$; this is equivalent to finding a pair $(x,y)$ with $x\in \{1,\ldots s\}$ and $y\in \{1,\ldots, m\}$. 
	For $i\in \{1,\ldots,s\}$, let $d_i = L\cdot \beta^{i}$.  Moreover, for $j\in \{0,\ldots,m \}$, define $c_j$ such that the following hold: 
	\begin{align*}
		c_0  = \min\{1/\beta,1-1/\beta\}, \ \ \  
		c_m  = \max\{1/\beta, 1-1/\beta\}, & \\   r = \frac{c_1+\beta(1-c_1)}{c_0+\beta(1-c_0)} =  \frac{c_2+\beta(1-c_2)}{c_1+\beta(1-c_1)} = 
		\ldots = 
		\frac{c_{i+1}+\beta(1-c_{i+1})}{c_{i}+\beta(1-c_{i})} = \ldots = & \frac{c_{m-1}+\beta(1-c_{m-1})}{c_{m}+\beta(1-c_{m})}.
	\end{align*}
	
	It can be verified that $r = (\frac{\beta^2-\beta+1}{2\beta+1})^{1/m}$. 
	For any $x\in\{1,\ldots, s\}$ and $y\in\{1,\ldots,m\}$, create an input sequence as $$\sigma_{x,y}  = (1,d_0), (1,d_1), (1,d_2), \ldots, (1,d_{x-1}), (c_{y},d_{x}).$$ Here $(a,b)$ indicates a sequence of items, all of infinitesimal small size $\epsilon$ and density $b$, and total size $a$. The optimal solution fills a capacity $c_{y}$ with the item of density $d_{x}$, and the remaining capacity of $1-c_{y}$ with the item of density $d_{x+1}$. We have $$\text{OPT}(\sigma_{x,y}) = (1-c_{y})\cdot d_{x-1} + c_{y} \cdot d_{x}.$$
	
	In what follows, we describe how an algorithm A with a competitive ratio better than $C(k,H)$ can be used to correctly find unknown values $x\in\{1,\ldots s\}$ and $y\in\{1,\ldots,m\}$.
	Suppose the next item has density $d_\alpha$. Let $w_\alpha$ denote the total size of items of density $\leq d_{\alpha-1}$, and suppose $w_\alpha\in [1-c_q,1-c_{q-1})$ for some $q\in \{[1,\ldots, m\}$, that is, $1-w_\alpha \in (c_{q-1},c_{q}]$.  
	Define $\Delta_\alpha = c_{q+1}-c_q$. 
	When the empty space in the knapsack of A becomes less than $\Delta_\alpha$, the algorithm ``guesses" $x = \alpha$ and $y =q$. 
	In what follows, we show that A makes these guesses at some point and the guesses made by A are correct. 
	 For the sake of contradiction, suppose A does not make a guess, or at least one of its guesses is incorrect. We show that the competitive ratio of A will be larger than $C(k,H)$. Suppose $w_x\in [1-c_q,1-c_{q-1})$ for some $q\in \{1,\ldots,m\}$, that is, $1-w \in (c_{q-1},c_{q}]$. 
	There are four possibilities to consider:
	
	\begin{itemize}
		\item[] \textbf{A does not make a guess:} Since A does not make a guess, the empty space in the knapsack is at least $\Delta_x = c_{q+1}-c_q$. Also, the total size of items of density $\leq d_{x-1}$ is at least $1-c_q$. Therefore, 
			the contribution of items of density $d_x$ to the value of the knapsack is at most $c_q\cdot d_x$, and the contribution of other items is at most $(1-c_q-\Delta_x)d_{x-1} = (1-c_{q+1})d_{x-1}$. 
Therefore, the total value of value of the knapsack of A is at most $(1-c_{q+1})d_{x-1} + c_q d_x \leq (1-c_{y+1})d_{x-1} + c_{y}d_x$. We can write:
		
		$$\comp(A) \geq \frac{(1-c_y)d_{x-1}+c_y d_x}{(1-c_{y+1})d_{x-1} + c_{y}d_x} >  r.$$
		
		\item[]	\textbf{Wrong guess for $y$:} 
		Suppose the algorithm stops but makes the wrong guess for $y$, that is, $q \neq y$. First, suppose $q \leq y-1$, that is, A reserves too little space for items of density $d_x$. The total size of items of density $\leq d_{x-1}$ is at least $1-c_q$. Therefore, the total size of items of density $d_x$ is at most $c_q$. The final value of the knapsack is thus at most $(1-c_q)d_{x-1} + c_q d_x \leq (1-c_{y-1})d_{x-1}+c_{y-1} d_x$. We can write: 
		$$ \comp(A) \geq \frac{(1-c_y)d_{x-1}+c_y d_x}{(1-c_{y-1})d_{x-1}+c_{y-1} d_x} = r.$$

		Next, suppose $p \geq y+1$, that is, too much space is reserved for items of density $d_x$ and some of this space stays empty. The total size of items of density at most $d_{x-1}$ in the knapsack will be at most
		$1-c_{q+1}$, and the total size of items of density $d_x$ is at most $1-c_q$. Therefore, we have $A(\sigma_{x,y}) \leq (1-c_{q+1})d_{x-1} + c_q d_x \leq (1-c_{y+1})d_{x-1}+c_{y}d_x$. We can write 
		
		$$ \comp{(A)} \geq \frac{(1-c_y)d_{x-1}+c_y d_x}{(1-c_{y+1})d_{x-1} + c_{y}d_x } > r.$$
		\item[] \textbf{Wrong guess for $x$:}
		Suppose $p=y$ but $\alpha < x$ (note that $\alpha$ cannot be larger than $x$). 
		The value of the knapsack is maximized when the algorithm fills a capacity $c_y$ with items of density $\alpha$ and the rest with items of density $d_{\alpha-1}$, that is, $A(\sigma_{x,y})\leq (1-c_y)d_{\alpha-1} + \beta c_y d_{\alpha} \leq (1-c_y)d_{x-2} + \beta c_y d_{x-1}$. We can write 
		
		$$ \comp{(A)} \geq \frac{(1-c_y)d_{x-1}+c_y d_x}{(1-c_y)d_{x-2} + \beta c_y d_{x-1}} = \beta > r. $$
	\end{itemize}
	
	To summarize, as long as $\comp{(A)} < g_{m}(\beta)$, one can use A to guess both values of $x$ and $y$ correctly, that is, it can identify $z\in \{1,\ldots,p\}$ with $k$ queries. 
	This, however, contradicts the lower bound of Theorem~\ref{th:range}. Therefore, the initial assumption about the competitive ratio of A does not hold, and we conclude that $\comp{(A)} \geq g_m(\beta)$.
\end{proof}

\subsection{Omitted material of Section~\ref{sec:extensions}}

\begin{proof}[Proof of Theorem~\ref{thm:resource.ts+bidding}]
Consider the algorithms (upper bounds) of Theorems~\ref{thm:ts} and~\ref{thm:noisy.upper}, with $l$ imperfect advice bits. The advice error $\eta$ is, from the assumption at most $(1/3-c)l$, thus at most a fraction equal to $1/3-c$ of the advice bits may be erroneous. Then, from 
the discussion in Sections~\ref{subsec:ts.comparisons} and~\ref{subsec:bidding.comparisons}, it follows that these two algorithms have better competitive ratio (for the corresponding problem) than any algorithm with $k$ bits of advice (irrespectively of the latter's advice error), as long as 
\[
l(1-(1/3-c))(1-{\cal H}(\frac{1/3-c}{1-(1/3-c)})>k,
\] 
for all sufficiently large $k$, which yields the result.
\end{proof}

\begin{proof}[Proof of Theorem~\ref{thm:ts.robust}]
Define $p_1$ and $p_2$ to be such that $M/p_1=(M/m)^\rho$, and  $p_2/{m}=(M/m)^\rho$, respectively, and note that $m\leq p_1 \leq p_2 \leq M$. Then an algorithm for time-series search is $r$-robust if and only if it sets its reservation price equal to some $p \in [p_1,p_2]$. The proof proceeds along the lines of the proof of Theorem~\ref{thm:ts}; instead of using the queries to find a suitable reservation price in
$[m,M]$, we search instead for a reservation price in $[p_1,p_2]$. In particular, note that by definition of $p_1,p_2$, we have that  $p_2/p_1= (M/m)^{2\rho-1}$.
\end{proof}

\begin{proof}[Proof of Theorem~\ref{thm:bidding.robust}]
For the upper bound, the proof is similar to that of Theorem~\ref{thm:noisy.upper}. The only difference is that $b$ must be optimized under the condition that each strategy in ${\cal X}_{b,2^k}$ must be individually $r$-robust. We know that a geometric strategy for online bidding with base $b$
has competitive ratio at most $b^2/(b-1)$. Since each strategy is geometric with base $b^{2^k}$, it follows that as a long as $b^{2^{k+1}}/(b^{2^k}-1) \leq r$, every strategy in ${\cal X}_{b,2^k}$ is 
$r$-robust, hence the strategy chosen by the imperfect advice as well. 

For the lower bound, we appeal to the following property shown in~\cite{DBLP:journals/corr/abs-2111-05281}: If all $l$ bidding sequences in the collection $X_0, \ldots ,X_{l-1}$ are $r$-robust, then it must be that $\alpha_{\bar X}^2/(\alpha_{\bar X}-1) \leq r$, where recall that $\bar{X}$ is the sequence of the union of all bids in the $l$ strategies, sorted in non-decreasing order. Then the proof follows along the lines of the proof of Theorem~\ref{thm:noisy.lower}, with the observation that the competitive ratio is minimized if $l=2^k$.
\end{proof}

\end{document}